\documentclass[letterpaper,journal]{IEEEtran}
\usepackage{amsmath,amsfonts}
\usepackage{array}
\usepackage{textcomp}
\usepackage{stfloats}
\usepackage{url}
\usepackage{verbatim}
\usepackage{graphicx}
\usepackage{cite}
\usepackage{multirow}
\usepackage{pifont}
\usepackage{subcaption}
\usepackage{setspace}
\usepackage{makecell}
\usepackage{booktabs}   
\usepackage{enumitem}
\usepackage[most]{tcolorbox}
\newcommand{\xwx}[1]{\textcolor{black}{#1}}
\definecolor{darkblue}{rgb}{0, 0.0, 0.78}
\usepackage{xcolor}
\usepackage{listings}
\definecolor{bg}{RGB}{246,247,245}
\definecolor{frame}{RGB}{160,170,160}  
\definecolor{title}{RGB}{120,135,125}  
\definecolor{font}{RGB}{80,90,85}
\usepackage{amsthm}
\usepackage{amssymb}
\newtheorem{theorem}{Theorem}[section]  
\newtheorem{corollary}{Corollary}[theorem]
\tcbset{
  morandibox/.style={
    enhanced,
    breakable,
    colback=bg,
    colframe=frame,
    colbacktitle=title,
    coltitle=white,
    fonttitle=\bfseries\large,
    sharp corners=all,
    boxrule=1pt,
    title={Prompt for System Agent},
    attach boxed title to top left={yshift=-2mm, xshift=3mm},
    boxed title style={
      sharp corners=all,
      boxrule=0pt,
      frame hidden,
      interior style={fill=title}
    }
  }
}
\usepackage[ruled,vlined]{algorithm2e}

\begin{document}

\title{Towards Reliable Agentic Progressive Text-to-Visualization with Verification Rules}

\author{
    \IEEEauthorblockN{Wenxin Xu\IEEEauthorrefmark{1}, Chen Jason Zhang\IEEEauthorrefmark{1}, Xiaoyong Wei\IEEEauthorrefmark{1}, Haoyang Li\IEEEauthorrefmark{1}, Hwanhee Kim\IEEEauthorrefmark{1}, \\ 
    Yuanfeng Song\IEEEauthorrefmark{2}, 
    Raymond Chi-Wing Wong\IEEEauthorrefmark{3}\\
    }
    \IEEEauthorblockA{\IEEEauthorrefmark{1}The Hong Kong Polytechnic University, Hong Kong, China
    \IEEEauthorrefmark{2} ByteDance, Shanghai, China
    \\ \IEEEauthorrefmark{3} The Hong Kong University of Science and Technology, Hong Kong, China
    }
}

\markboth{Journal of \LaTeX\ Class Files,~Vol.~14, No.~8, August~2021}%
{Shell \MakeLowercase{\textit{et al.}}: A Sample Article Using IEEEtran.cls for IEEE Journals}


\maketitle

\begin{abstract}
Text-to-Visualization (Text-to-Vis) translates natural language queries into visualization query languages, enabling non-expert users to perform data analysis. However, most existing methods follow a one-shot paradigm that requires users to specify all visualization details in a single round, often leading to cognitive overload and incorrect visualizations. In this paper, we propose PMVis, a progressive multi-turn paradigm for text-to-vis, where users' intents are refined through multi-turn interactions. To support research in this paradigm, we construct PMVisBench, the first dataset designed to capture the progressive and iterative nature of real-world user queries. It is built through VQL simplification and NLQ reconstruction, with explicit rule constraints to ensure each intermediate VQL remains valid and meaningful. Building upon PMVis, we further introduce PMVisAgent, an agent-based framework that simulates realistic user–system dialogues. PMVisAgent consists of a User, a System, and a Validation Agent that performs verification and repair via a ReAct-style tool-use loop to mitigate error accumulation across rounds, with explicit interaction and verification rules to ensure reliability of the multi-agent system. Extensive experiments on PMVisBench demonstrate that PMVisAgent significantly outperforms state-of-the-art text-to-vis baselines. It achieves up to 17.57\% and 23.21\% improvements in execution accuracy in single-table and multi-table settings, respectively, while ablation studies confirm the importance of combining progressive interaction with clarification. 
The code is available at \url{https://github.com/wxxv/PMVis}.
\end{abstract}

\begin{IEEEkeywords}
Text-to-Visualization, Large Language Model, Progressive Multi-turn Interaction, Multi-agent Collaboration, Data Analysis and Engineering
\end{IEEEkeywords}

\section{Introduction}
Data visualization (DV) plays an important role in modern data management systems by serving as an intuitive interface between users and structured data. By mapping query results to visual encodings, DV helps users inspect distributions, compare groups, identify trends, and communicate data insights. As visualization is closely related to query formulation, result interpretation, and exploratory data analysis, it has been extensively studied in the database community~\cite{luo2021synthesizing,bhowmick2020aurora,Lu2026MultiVis,xie2024haichart,ji2024navigating,vartak2015seedb,chai2020crowdchart,luo2020steerable,zhang2024natural,ji2023visualization,lu2025towards,wan2025datavist5}. However, constructing effective visualizations often requires users to specify their intent using declarative visualization languages, such as Vega-Lite~\cite{satyanarayan2016vega}, ggplot2~\cite{villanueva2019ggplot2}, and Vega-Zero~\cite{luo2021natural}. Despite their expressiveness, these languages require specialized syntax and therefore impose a considerable barrier on non-expert users.

\begin{table*}[t]
    \caption{Comparison of text-to-vis benchmarks.}
    \label{tab:benchmark_comp}
    \centering
    \resizebox{\linewidth}{!}{%
    \begin{tabular}{lccccccccl}
    \toprule
    \textbf{Dataset} & \textbf{\#-Tables} & \textbf{\#-NL} & \textbf{\#-VIS} & \textbf{\#-NL/\#-VIS} & \textbf{\#-Chart Types} & \textbf{Multi-turn} & \textbf{Progressive} & \makecell{\textbf{NL Query}\\\textbf{Generation}} \\
    \midrule
    nvBench~\cite{luo2021synthesizing}         & 780  & 25750  & 7247  & 3.55 & 4 & \ding{55} & \ding{55} & Rule-based \\
    nvBench 2.0~\cite{luo2025nvbench}          & 780  & 7878  & 24076 & 0.33 & 6 & \ding{55} & \ding{55} & LLM-based \\
    VisEval~\cite{chen2024viseval}             & 748  & 2524  & 1150  & 2.19 & 4 & \ding{55} & \ding{55} & LLM-based \\
    nvBench-Rob~\cite{lu2025towards}           & 552  & 1182  & 1182  & 1.00 & 7 & \ding{55} & \ding{55} & LLM + Human \\
    NVBench-Feedback~\cite{xiong2025interactive} & 837   & 23225 & 6327  & 3.67 & 7 & \ding{55} & \ding{55} & LLM-based \\
    Dial-NVBench~\cite{song2024marrying}       & 780   & 21725 & 4495  & 4.83 & -- & \ding{51} & \ding{55} & Rule-based \\
    \midrule
    \textbf{PMVisBench (Ours)}                 & 748  & 2523  & 1149  & 2.20 & 4 & \ding{51} & \ding{51} & LLM + Rule-based + Human \\
    \bottomrule
    \end{tabular}%
    }
\end{table*}

To lower the barrier of visualization construction, text-to-visualization (text-to-vis) has emerged as an important task in the database community~\cite{luo2021synthesizing,song2022rgvisnet}. Given a natural language query (NLQ), a text-to-vis system translates the user intent into a visualization query language (VQL), a visualization-oriented intermediate representation that captures chart semantics such as visual marks, data attributes, filters, aggregations, and grouping operations. The generated VQL can then be compiled into a declarative visualization language (DVL) specification to render the final chart. This pipeline frees users from manually writing DVL code and enables non-expert users to create visualizations through natural language interaction.

Prior work has made important progress in text-to-vis by developing large-scale benchmarks, stronger evaluation protocols, and more capable generation models. Representative efforts include nvBench~\cite{luo2021synthesizing}, VisEval~\cite{chen2024viseval}, nvBench 2.0~\cite{luo2025nvbench}, nvBench-Rob~\cite{lu2025towards}, and recent LLM- or agent-based frameworks for visualization generation~\cite{wu2024automated,ouyang-etal-2025-nvagent}. These studies cover general NLQ-to-VQL translation, data quality, linguistic robustness, ambiguous-query understanding, and complex multi-table reasoning. However, despite this progress, they predominantly evaluate text-to-vis as a one-shot translation problem: the system receives a fully specified NLQ and is judged by whether it can directly generate the corresponding final VQL. 

\noindent\textbf{Challenges.} The one-shot formulation overlooks a common property of real exploratory analysis: users rarely specify all visualization requirements (e.g., chart type, filters, aggregation, and grouping) at once, but instead start with a coarse, high-level request and progressively refine it through follow-up rounds. Forcing such interactions into a single fully specified query increases cognitive burden, especially for non-experts, and frequently produces visualizations that deviate from users' true intent. A more realistic setting is therefore \emph{progressive multi-turn refinement}, where the system generates a valid, renderable visualization at every intermediate turn and incrementally refines it as new user requirements arrive. This raises a critical question: \emph{how can we extend text-to-vis from one-shot translation to progressive interaction while maintaining valid visualizations at each turn?}

\noindent\textbf{Existing Benchmarks and Their Limitations.} As summarized in Table~\ref{tab:benchmark_comp}, existing text-to-vis benchmarks reflect this one-shot assumption. They cover general translation, evaluation quality, linguistic robustness, and query ambiguity through nvBench~\cite{luo2021synthesizing}, VisEval~\cite{chen2024viseval}, nvBench-Rob~\cite{lu2025towards}, and nvBench 2.0~\cite{luo2025nvbench}, and are suitable when the goal is to evaluate the final fully specified visualization query. However, they are not designed to evaluate the more realistic setting of progressive multi-turn interaction, where users begin with coarse analytical intents and incrementally add constraints over several rounds. Existing datasets provide only the terminal NLQ--VQL pair and omit intermediate user refinements and executable visualization targets at each turn. As a result, they cannot support the study of whether a system can preserve valid visualizations throughout an interaction, correctly incorporate newly added constraints, or avoid error accumulation across turns. The closest multi-turn effort, Dial-NVBench~\cite{song2024marrying}, provides conversational dialogues for VQL-related questions; however, its final VQL is still determined only by the last complex query, leaving intermediate turns without executable visualization targets. Consequently, directly using existing datasets would reduce the problem to ordinary one-shot translation and miss the central behavior that we aim to study. Thus, prior benchmarks lack the key ingredients needed for progressive refinement: structurally organized refinement trajectories and turn-by-turn valid VQL generation. \emph{This gap leaves progressive multi-turn text-to-vis underexplored as a task paradigm and benchmark setting.}

\noindent\textbf{Our proposal.} To address this limitation, we propose \textbf{PMVis}, a novel progressive multi-turn paradigm for text-to-vis. Realizing this paradigm presents two key challenges: (1) the lack of benchmarks that capture realistic, progressive refinement trajectories, and (2) error accumulation across multi-turn translation. 

For the first challenge, we construct \textbf{PMVisBench} to instantiate this progressive multi-turn formulation. PMVisBench is not intended to replace one-shot benchmarks; instead, it is a trajectory-level extension built on VisEval~\cite{chen2024viseval}. We treat each original fully specified NLQ--VQL pair as the terminal state of a progressive session and derive preceding intermediate turns that gradually introduce visualization requirements. As shown in Table \ref{tab:benchmark_comp}, each session is organized as a \emph{refinement trajectory}, where every intermediate round produces a valid, executable VQL that renders a meaningful visualization, and the following rounds progressively add specification details, forming a natural, simple-to-complex progression that mirrors real-world exploratory analysis. To avoid assuming a single deterministic refinement path, PMVisBench uses randomized clause masking to generate diverse plausible trajectories, thereby simulating the uncertainty of how users may progressively refine their intents. During this process, we enforce explicit constraints to prevent invalid simplifications. Specifically, we apply (i) \emph{Masking-Constraint rules} to preserve core clauses and clause/column dependencies, and (ii) \emph{Visualization-Feasibility rules} to ensure each intermediate VQL remains renderable with non-empty results. We then generate corresponding NLQs with LLMs, followed by a comprehensive manual correction for linguistic naturalness and semantic alignment.

For the second challenge, in progressive multi-turn translation, the system builds VQL of each round upon previous outputs. Once an early round introduces errors, it propagates through the dialogue history and compounds in subsequent rounds, ultimately degrading the final visualization. To tackle this error accumulation problem, we introduce \textbf{PMVisAgent}, an agent-based framework that conducts user--system dialogues with rule-guided verification and repair. PMVisAgent comprises (i) a \emph{User Agent} that issues NLQs and provides minimal clarifications when needed, (ii) a \emph{System Agent} that translates the current context into a candidate VQL, and (iii) a \emph{Validation Agent} that applies a ReAct-style loop with tools to validate and iteratively refine the candidate. We further define explicit rules for the User and Validation agents to prevent information leakage during clarification and to regulate tool usage and repair orders. We also prove that these rules provide PMVisAgent with theoretical reliability guarantees. Together, these designs enable progressive refinement and improve end-to-end correctness and robustness in multi-turn settings.

Extensive experiments on PMVisBench demonstrate that PMVisAgent significantly outperforms existing one-shot text-to-vis models. Against state-of-the-art baselines, PMVisAgent improves execution accuracy by 17.57\% in single-table and 23.21\% in multi-table settings. Ablation studies further show clear performance drops when removing either the multi-turn interaction or validation module, while the full framework consistently performs best, confirming the effectiveness of our progressive multi-turn validation paradigm.

\noindent\textbf{Contributions.} Our main contributions are summarized as follows:
\begin{itemize}
    \item We introduce an innovative progressive multi-turn text-to-vis paradigm, PMVis, and construct PMVisBench, the first benchmark dataset designed to capture iterative real-world refinement of user queries across diverse scenarios.
    \item We propose PMVisAgent, an agent-based framework that realizes the PMVis paradigm. The framework leverages multi-agent collaboration and incorporates a ReAct-style reasoning loop with tool use to mitigate error accumulation across rounds, thereby producing more accurate and robust visualizations.
    \item Extensive experiments on the PMVisBench demonstrate that PMVisAgent significantly outperforms existing text-to-vis models based on the traditional one-shot paradigm, achieving up to 88.6\% and 77.86\% execution accuracy in single-table and multi-table settings. Ablation studies further reveal that the full multi-turn validation framework consistently delivers the best results, confirming the effectiveness of our proposed approach.
\end{itemize}

\section{Related Work}
\subsection{Benchmark Datasets for Automatic Data Visualization}

Recent progress in automatic visualization spans NLP, data mining, and database systems \cite{ge2024automatic, song2022rgvisnet, qian2021learning, luo2021synthesizing, tang2022sevi}, motivating diverse benchmarks.
Luo et al.\cite{luo2021synthesizing} proposed NVBench, synthesizing text-to-vis examples via a unified AST. Chen et al. \cite{chen2024viseval} introduced VisEval and an automated evaluation pipeline. Lu et al. released nvBench-Rob \cite{lu2025towards} to test robustness under lexical/phrasal perturbations.
Song et al. \cite{song2024marrying} extended benchmarks to conversational settings with CoVis and Dial-NVBench, and nvBench 2.0 \cite{luo2025nvbench} further targets ambiguity with multiple valid visualizations.

Despite addressing robustness, dialogue, and ambiguity, most benchmark datasets still followed the one-shot paradigm, even in conversational settings \cite{song2024marrying} where the system typically outputs the final visualization at the final round.
In contrast, we introduce the PMVis paradigm and construct PMVisBench to evaluate progressive multi-turn refinement, aligning with how users naturally clarify and refine intents across rounds.

\subsection{Text-to-Visualization (Text-to-Vis)}

Text-to-vis translates NL queries into executable visualization specifications, reducing the reliance on templates or manual configuration \cite{gao2015datatone, hoque2017applying, yu2019flowsense}.
Early approaches based on traditional neural networks include Data2Vis \cite{dibia2019data2vis}, ncNet \cite{luo2021natural}, and retrieval-augmented RGVisNet \cite{song2022rgvisnet}.
With the rapid development of LLMs, prompt-based methods such as Chat2Vis \cite{maddigan2023chat2vis} and Prompt4Vis \cite{li2025prompt4vis} substantially improve the quality of visualizations, while nvAgent \cite{ouyang-etal-2025-nvagent} adopts a multi-agent pipeline to enhance reasoning over complex schemas.

These methods showcase the potential of LLMs for text-to-vis, but they still follow fixed workflows, where each component rigidly executes its role, sacrificing the flexibility of LLMs. Thus, we introduce PMVisAgent, an agent-based framework that supports the PMVis paradigm, introducing a ReAct-style validation step, enabling the LLM to self-plan, detect, and correct errors during the VQL generation process.

\subsection{Multi-turn Interactive Systems}

Multi-turn dialogue systems include open-domain agents (e.g., Microsoft XiaoIce\cite{zhou2020design}, Google Meena\cite{adiwardana2020towards}, Meta's BlenderBot\cite{shuster2022blenderbot}) and task-oriented systems that combine modular or end-to-end designs for robustness \cite{chen2018dialogue, zhang2020recent, balaraman2021recent, li2017end, wen2017network, yang2021ubar}.
In text-to-vis, Dial-NVBench \cite{song2024marrying} is a representative multi-turn benchmark, and MMCoVisNet addresses VQL-related questions, but visualization generation remains effectively one-shot, driven by the final complex query.

Our work focuses on progressive multi-turn interaction for text-to-vis. Unlike previous approaches that assume a one-shot generation of output VQL, PMVis supports incremental refinement across multiple rounds. This paradigm decomposes a complex query into simpler queries to reduce users’ cognitive burden and better reflect how users naturally explore and express their analytical intentions.

\section{Problem Formulation \& PMVisBench}

In this section, we first introduce the PMVis paradigm, highlighting the key characteristics that distinguish it from the traditional one-shot text-to-vis paradigm. Then we describe the construction of PMVisBench, a dedicated dataset designed for PMVis.

\subsection{Problem Formulation}
\subsubsection{Traditional One-shot Paradigm in Text-to-Vis}
Let $q$ be a natural language question (NLQ) and $v$ the corresponding visualization query language (VQL), which can be compiled into a declarative visualization language (DVL) specification. The traditional one-shot paradigm translates an NLQ into a VQL in a single round:
\begin{equation}\label{eq:traditional paradigm}
    v = f(q)
\end{equation}
where $f(\cdot)$ denotes the translation function. The system then executes $v$ over the database to obtain the visualization.

This formulation implicitly assumes that (1) \emph{Completeness}. The input NLQ $q$ must fully specify the visualization intent in a single query, including the chart type, data attributes, filters, and other relevant details. (2) \emph{Determinism}. Each NLQ $q$ is mapped to a unique, deterministic VQL, ignoring the possibility of multiple semantically valid visualizations. (3) \emph{Independence}. Each NLQ-VQL pair is processed independently, without leveraging context from prior interactions. However, real users often start with coarse intents and refine them iteratively, making the one-shot paradigm prone to ambiguity and omission.

\begin{figure*}[th]
    \centering
    \includegraphics[width=\textwidth]{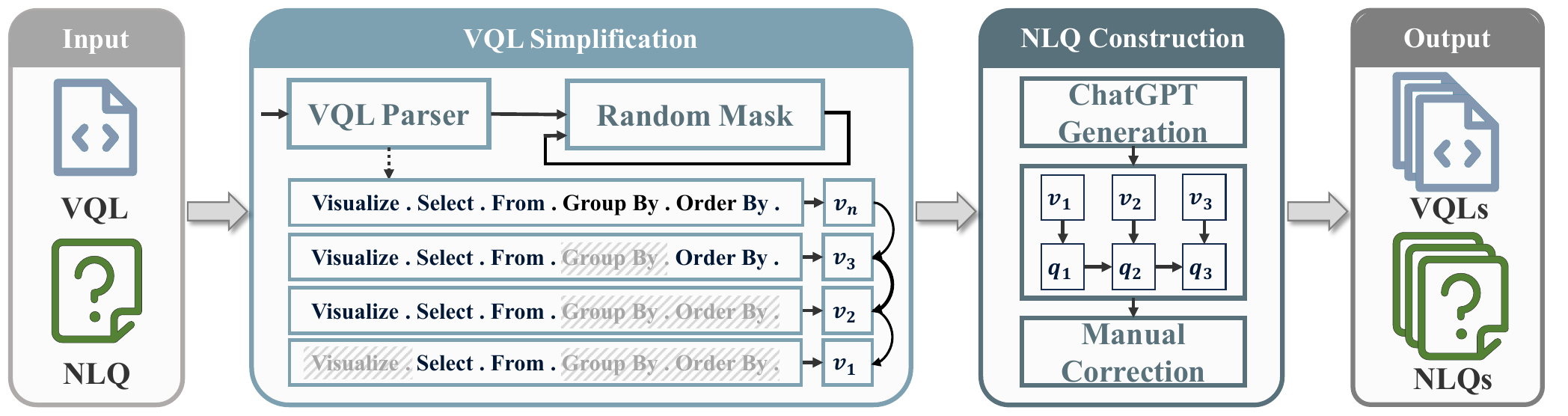}
    \caption{Pipeline of PMVisBench dataset construction. In the VQL Simplification step, original VQLs are parsed into clauses and progressively simplified through random masking while preserving mandatory components. Subsequently, in the NLQ Construction step, ChatGPT was employed to generate corresponding NLQs based on the simplified VQLs and dialogue history, after which entire dataset was manually inspected and corrected to ensure linguistic naturalness and semantic fidelity.}
    \label{fig:dataset_construction}
\end{figure*}

\subsubsection{PMVis Paradigm Formulation}
To align with iterative user behavior, we formulate \textit{progressive multi-turn} text-to-vis (PMVis) as a sequence of NLQ--VQL pairs:
\begin{equation}\label{eq:PMVis paradigm}
\mathcal{P}=\{P_i \mid P_i=(q_i, v_i),\ i=1,2,\dots,n\}
\end{equation}
where $q_i$ denotes the NLQ at round $i$, and $v_i$ represents the corresponding VQL generated by the system. PMVis is characterized by two essential properties that (1) \emph{Multi-turn}. Each pair ($q_i, v_i$) corresponds to one round of dialogue, capturing the iterative refinement process of the user. Unlike the independent one-shot mapping, later rounds explicitly depend on prior interactions.
(2) \emph{Progressive refinement}. At each round, the VQL is derived as:
\begin{equation}
    v_i = f(v_{i-1}, q_i)
\end{equation}
Accordingly, PMVis induces progressive sequences of NLQs and VQLs:
\begin{equation}\label{chain q}
    \mathcal{Q}=\{q_i \mid i=1,2,\dots,n\},\quad
    \mathcal{V}=\{v_i \mid i=1,2,\dots,n\}.
\end{equation}
Typically, $v_i$ contains strictly more semantic components (e.g., filters, grouping, aggregation) than $v_{i-1}$. The final round recovers the one-shot setting: $q_n=q$ and $v_n=v$ in Eq.~(\ref{eq:traditional paradigm}).

Notably, although CoVis~\cite{song2024marrying} involves multi-turn interactions, it generates visualizations only in the final round, which still follows the one-shot mapping in Eq.~(\ref{eq:traditional paradigm}). In contrast, PMVis decomposes a complex visualization intent into simplified sub-queries and produces an executable intermediate VQL at each turn as in Eq.~(\ref{eq:PMVis paradigm}).

\subsection{PMVis Dataset: PMVisBench}\label{sec:PMVisBench}

In this section, we describe the construction of \textbf{PMVisBench}, which is designed to support the PMVis progressive multi-turn paradigm.
As illustrated in Fig.~2, the overall pipeline consists of two steps: \emph{VQL simplification} and \emph{NLQ construction}.
Given an original complex NLQ--VQL pair $(q_n, v_n)$, we generate a progressive trajectory
$\langle(q_0,v_0),(q_1,v_1),\ldots,(q_n,v_n)\rangle$.
To ensure that each intermediate step is structurally legal and semantically valid, we constrain the simplification process with two families of rules:
\textbf{Masking Constraint (MC) rules} and \textbf{Visualization Feasibility (VF) rules}, denoted as \textbf{MC-Rule} and \textbf{VF-Rule}, respectively.
If a sampled mask violates any rule, we reject it and re-sample a new clause.

\subsubsection{VQL Simplification}

We first parse each $v_n$ into its constituent clauses (e.g., Visualize, Select, From, Where, Group By, etc.):
\begin{equation}
    C_n = VQLParser(v_n) = \{c_{sel}, c_{from}, c_1, c_2, \ldots, c_m\},
\end{equation}
where $c_{sel}$ and $c_{from}$ denote the \textit{Select} and \textit{From} clauses, and
$\{c_1,\ldots,c_m\}$ denote other optional clauses.

To create a progressive simplification path, we apply a random masking strategy that removes one optional clause per iteration:
\begin{equation}
    C_{i-1} = RandomMask(C_i) \triangleq C_i \setminus \{c\},\quad c \sim \texttt{Unif}(O_i),
\end{equation}
where $O_i = C_i \setminus \{c_{sel},c_{from}\}$ is the optional clause set at step $i$.
A sampled clause $c$ is accepted only if it satisfies the following rules.

\noindent\textbf{MC-Rule 1 (Non-maskable Core):} $c \notin \{c_{sel},c_{from}\}.$

\noindent\textit{Explanation.} We never mask the \textit{Select} and \textit{From} clauses to avoid structurally illegal queries.

\noindent\textbf{MC-Rule 2 (Clause Dependency):} $\forall c' \in (C_i\setminus\{c\}),\ \mathcal{P}(c') \subseteq (C_i\setminus\{c\}).$

\noindent\textit{Explanation.} Each clause keyword may require prerequisite keywords.
$\mathcal{P}(\cdot)$ returns the prerequisite clause set induced by VQL semantics.
For example, masking \textit{GroupBy} is forbidden if \textit{Having} remains; masking prerequisite aggregation context is forbidden if aggregation-dependent clauses remain.

\noindent\textbf{MC-Rule 3 (Column Dependency):} $\mathrm{Cols}(C_i\setminus\{c\}) \subseteq \mathcal{B}(C_i\setminus\{c\}).$

\noindent\textit{Explanation.} $\mathrm{Cols}(\cdot)$ extracts referenced columns appearing in the remaining clauses, while
$\mathcal{B}(\cdot)$ extracts columns still introduced (bound) by the remaining \textit{From/Join} clauses.
This rule rejects masking that removes bindings but leaves their variables referenced elsewhere (e.g., masking a \textit{Join} while columns from the joined table remain referenced).

\noindent\textbf{VF-Rule 1 (Renderable Visualization):} $(c_{vis}\in C_i\setminus\{c\}) \Rightarrow \mathcal{I}(C_i\setminus\{c\})=\texttt{true}.$

\noindent\textit{Explanation.} If the \textit{Visualize} clause exists, the remaining clauses must still form a renderable visualization intent, preventing ill-posed chart requests.

\noindent\textbf{VF-Rule 2 (Non-empty Table):} \emph{let} $E(C)=\mathrm{Exec}(\mathrm{Assem}(C))$, \emph{then} $|E(C_i\setminus\{c\})|>0.$

\noindent\textit{Explanation.} $E(C)$ denotes the result table obtained by executing the query assembled from clause set $C$.
We reject masks that lead to an empty result table.

Applying the above procedure repeatedly yields a reverse chain $C_0 \subset C_1 \subset \cdots \subset C_n$.
Each $C_i$ is deterministically assembled into a VQL $v_i$, forming the progressive refinement chain in Eq.~(\ref{chain q}).
Notably, the random masking mechanism avoids biasing the dataset toward a deterministic simplification pattern, while the MC/VF rules ensure that every intermediate step
remains valid and meaningful.

\subsubsection{NLQ Construction}

\begin{figure*}[t!]
    \centering
    \includegraphics[width=\textwidth]{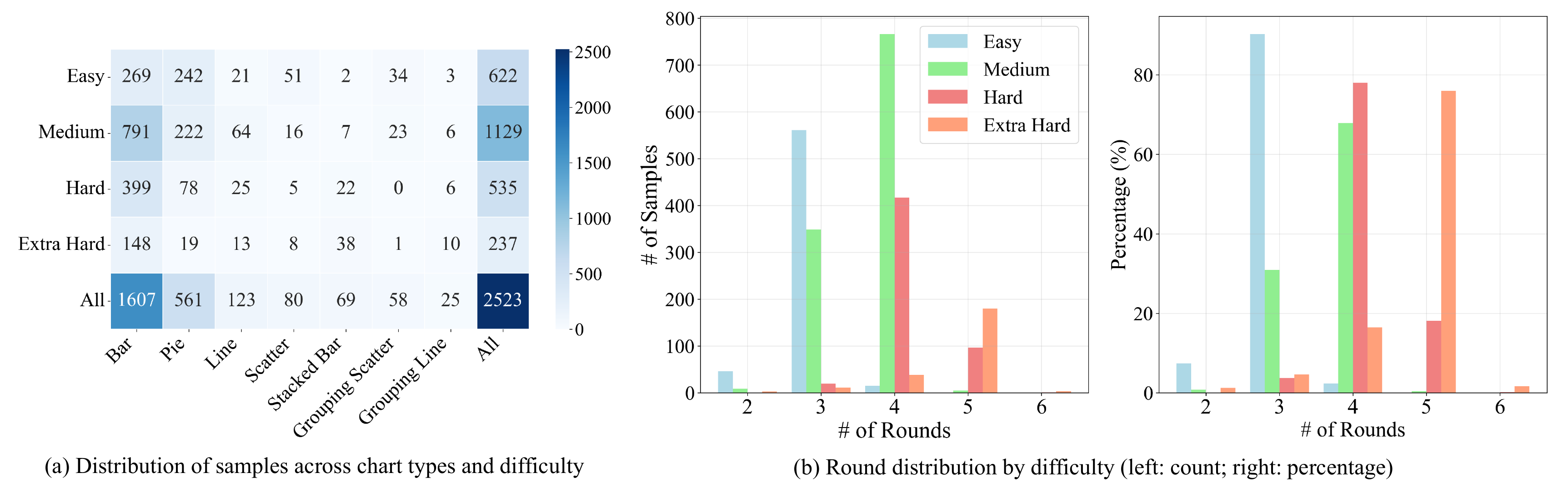}
    \caption{Dataset statistics of PMVisBench. (a) Distribution of samples across different chart types under four difficulty levels. (b) Distribution of rounds across difficulty levels, reported both in absolute counts (left) and percentages (right). Results indicate that easier cases usually require fewer rounds, while harder cases involve longer multi-turn refinement trajectories.}
    \label{fig:statistics}
\end{figure*}

After obtaining the sequence of simplified VQLs $\mathcal{V}=\{v_0,\ldots,v_n\}$, we construct the corresponding NLQs
$\mathcal{Q}=\{q_0,\ldots,q_n\}$ to form a progressive dialogue trajectory aligned with Eq.~(\ref{chain q}).
Concretely, leveraging ChatGPT, we first generate the initial NLQ $q_0$ corresponding to the most simplified VQL $v_0$.
Then, each subsequent NLQ $q_i$ is generated by considering both the dialogue history $\{P_1, \ldots, P_{i-1}\}$ and the semantic difference between $v_i$ and $v_{i-1}$.
This ensures that each follow-up question incrementally introduces one additional constraint (e.g., a filter, grouping, or aggregation), resembling how users refine intents in practice.

\textit{Quality Assurance}.
Since LLM outputs may be unstable and the progressive multi-turn trajectories in PMVisBench are obtained via reverse VQL simplification rather than forward real user interaction, we conduct comprehensive quality assurance over the entire dataset in three aspects: (i) NLQ quality and alignment, where each NLQ is manually corrected to be linguistically natural, semantically precise, and faithfully aligned with its corresponding VQL; (ii) trajectory plausibility, where we ensure it follows a realistic, reasonable, and user-like refinement pattern. For any sample that fails the human checks, we re-run the data generation script to regenerate the trajectory until it passes; and (iii) rule-based validity guarantee, where we enforce the MC-rule during VQL simplification, ensuring every intermediate step remains structurally legal and semantically valid. These procedures ensure the reliability and usability of PMVisBench, while also ensuring that the reverse-derived trajectories reasonably reflect real user behaviors.

\subsection{Statistics of PMVisBench}\label{sec:PMVisBench_Statistics}

To better understand the characteristics of PMVisBench, we present dataset statistics in Fig. \ref{fig:statistics}. Specifically, we obtain 1,149 distinct visualizations (VIS) and 2,523 (NL, VIS) pairs, covering 146 databases. Fig. \ref{fig:statistics}(a) shows the distribution of samples across different chart types and difficulty levels. The difficulty is categorized into four levels: Easy, Medium, Hard, and Extra Hard\xwx{, which are introduced in VisEval \cite{chen2024viseval}}. Fig.~\ref{fig:statistics}(b) further illustrates the distribution of dialogue rounds under different difficulty levels, reported both in absolute counts (left) and percentages (right). As expected, easier samples typically require only two or three rounds, while harder cases involve longer trajectories with up to five or six rounds. These statistics confirm that PMVisBench covers a wide range of complexity and interaction depth, making it well-suited for evaluating models under progressive multi-turn text-to-vis scenarios.

\begin{figure*}[ht!]
    \centering
    \includegraphics[width=\textwidth]{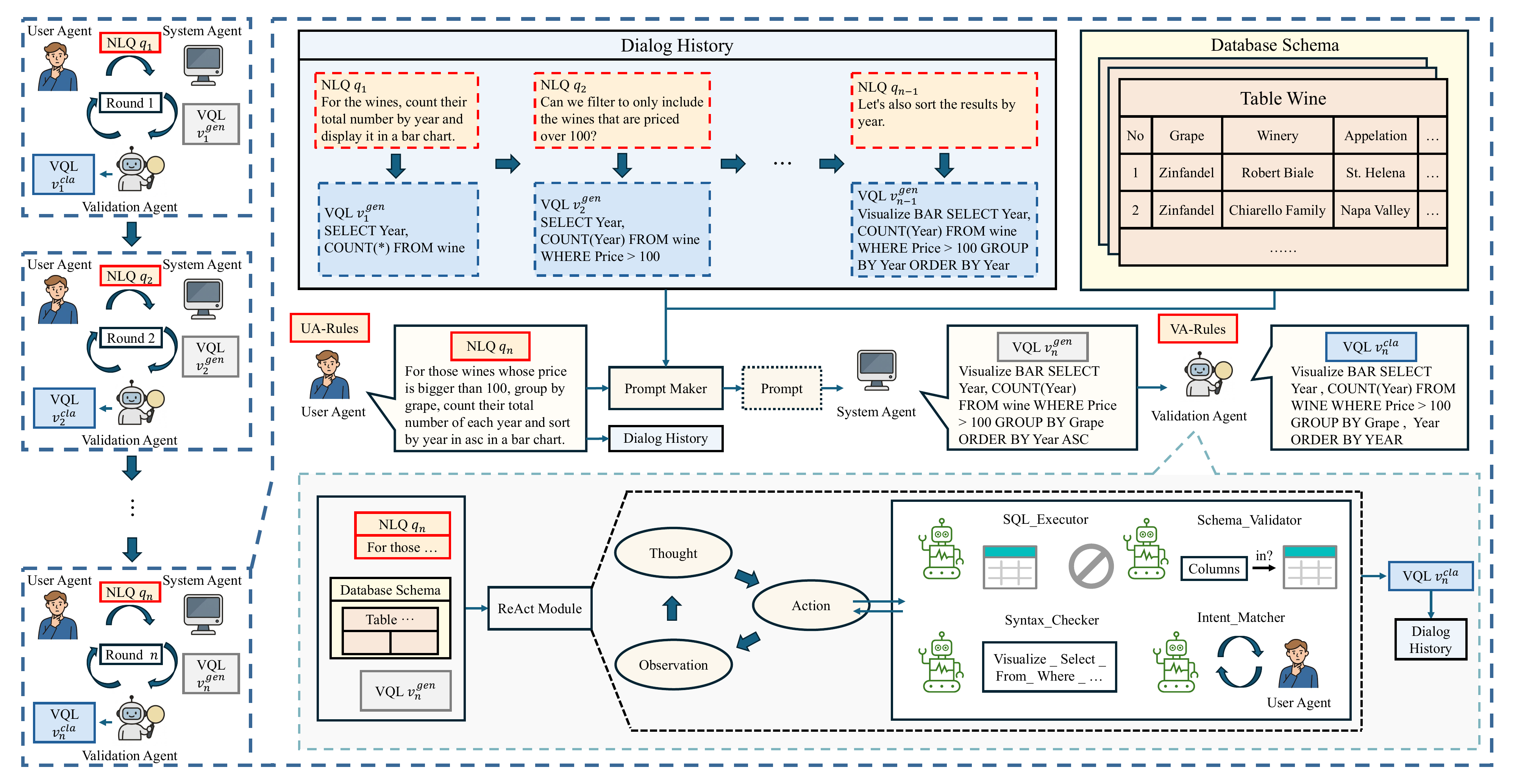}
    \caption{The overall architecture of PMVisAgent. The pipeline simulates a progressive multi-turn text-to-vis workflow with three collaborative agents. The $i$-th round text-to-vis transformation includes (i) a User Agent, which simulates real users by issuing NLQ $q_i$ and is responsible for clarifying visualization intents; (ii) a System Agent, which transforms the $q_i$ into its corresponding VQL $v_i^{gen}$ based on the dialogue history; and (iii) a Validator Agent, which employs a ReAct-style reasoning loop with tool usage (SQL executor, schema validator, syntax checker, intent matcher) to iteratively revise the $v_i^{gen}$ and provide the revised VQL $v_i^{cla}$, ensuring its accuracy in each round to reduce error accumulation across rounds.}
    \label{fig:PMVisAgent}
\end{figure*}

\section{PMVisAgent: An Agent-based Framework for PMVis Paradigm}

Building upon the PMVis paradigm, we introduce PMVisAgent, an agent-based framework designed to tackle the progressive multi-turn text-to-vis task. Unlike traditional approaches based on the one-shot paradigm, PMVisAgent explicitly models the progressive iterative nature of user–system interaction while mitigating the risk of error accumulation across rounds. As illustrated in Fig.~\ref{fig:PMVisAgent}, PMVisAgent involves three collaborative agents: a \textit{User Agent}, a \textit{System Agent}, and a \textit{Validation Agent}. 
We also define explicit rules for the User and Validation Agents and demonstrate that these rules provide theoretical reliability guarantees for PMVisAgent.
Through this collaborative workflow, PMVisAgent achieves both progressive refinement and robust correctness, enabling more faithful alignment with real user behaviors.

\subsection{User Agent}\label{sec:user-agent}

The User Agent serves as the proxy for users in the PMVisAgent framework. Its primary role is to issue an NLQ at each dialogue round and clarify the intent clarification request from the Validation Agent. Instead of independently determining the refinement path, the User Agent follows the pre-defined sequence of NLQs $\mathcal{Q}$ provided in PMVisBench, which ensures that the simulated interaction faithfully reflects realistic user behavior.

\textit{Dialogue state.}
For each dialogue instance, let the NLQ sequence be
$\mathcal{Q}=\langle q_1,\dots,q_n\rangle$.
Define $\mathcal{H}_i$ as the dialogue history up to round $i$ and $\mathcal{H}_0=\emptyset$, and define $\mathcal{S}_i$ as the dialogue status fed to the System Agent at round $i$.

\textbf{(i) NLQ issuance.} Define the issuance mapping $\eta:\mathbb{N}\times \mathcal{Q}\rightarrow \mathcal{Q}$:
\begin{equation}
    \eta(i,\mathcal{Q}) \;=\; q_i.
\end{equation}

\textbf{(ii) Status construction.} Define the status construction mapping $u:\mathcal{H}\times \mathcal{Q}\rightarrow \mathcal{S}$:
\begin{equation}
    \mathcal{S}_i \;=\; u(\mathcal{H}_{i-1}, q_i) \;=\; \mathcal{H}_{i-1}\cup \{q_i\}.
\end{equation}

\textbf{(iii) History update after validation.} After the Validation Agent outputs the clarified VQL $v^{cla}_i$, the history is updated as:
\begin{equation}
    \mathcal{H}_i \;=\; \mathcal{S}_{i-1}\cup \{v^{cla}_i\}.
\end{equation}

\textit{User Agent (UA) rules.}
To ensure user simulation is well-defined and prevent information leakage, we impose following UA rules.

\noindent\textbf{UA-Rule 1 (History Minimality): }
$
\forall z\in\mathcal{Z},\;\forall i,\quad z\notin \mathcal{H}_i.
$

\noindent\textit{Explanation.} Let $\mathcal{Z}$ denote the set of all intermediate system-side histories produced during validation (e.g., tool calls, error reports, reasoning traces).
The user-visible history only stores past NLQs and their finalized clarified VQLs and no intermediate process is ever recorded in $\mathcal{H}_i$.

\noindent\textbf{UA-Rule 2 (Non-leaking Clarification): }
During validation, the User Agent may be queried for clarification. Define $\Phi$ as the set of clarification requests, and $v_i^*$ as the ground-truth VQL at round $i$.
Define a clarification mapping:
\begin{equation}
\mathrm{Clarify}: \Phi \times \mathcal{H}_{i} \times v_i^* \times \mathcal{D}
\;\rightarrow\;
\mathcal{Q}^{clr},
\end{equation}
where $\mathcal{D}$ denotes the database schema, $\mathcal{Q}^{clr}$ denotes the space of natural-language clarification utterances.  
Define two predicates:
(i) $\mathrm{Sat}(\phi, q^{clr})\rightarrow\{0,1\}$ indicating that $q^{clr}$ resolves the request in $\phi$;
(ii) $\mathrm{Leak}(q^{clr}, v^*)\rightarrow\{0,1\}$ indicating that $q^{clr}$ reveals executable details of $v^*$ (e.g., exact VQL clauses). UA-Rule 2 enforces:
\begin{align}
\forall \phi_i\in\Phi,\;
q_i^{clr} \setminus\{&\bot\} =\mathrm{Clarify}(\phi_i,\mathcal{H}_i,v_i^*,\mathcal{D}) \notag\\
&\;\Rightarrow\;
\mathrm{Sat}(\phi_i,q_i^{clr})=1
\;\land\;
\mathrm{Leak}(q_i^{clr}, v_i^*)=0.
\end{align}
where $\bot \in \mathcal{Q}^{clr}$ denotes a refusal to provide clarification.

\noindent\textit{Explanation.} UA may consult $v_i^*$ to locate the ambiguous component, but must answer in natural language without exposing the exact ground-truth VQL.

\noindent\textbf{UA-Rule 3 (Rejection Mechanism): }
Define a predicate $\mathrm{AskGT}:\Phi\rightarrow\{0,1\}$ that flags requests attempting to extract ground-truth executable content (e.g., ``give me the exact VQL'').
UA-Rule 3 requires:
\begin{equation}
\forall \phi_i\in\Phi,\;
\mathrm{AskGT}(\phi_i)=1
\;\Rightarrow\;
\mathrm{Clarify}(\phi_i,\mathcal{H}_i,v_i^*,\mathcal{D})=\bot.
\end{equation}
\textit{Explanation.} If the Validation Agent requests the ground-truth VQL/SQL directly, the User Agent must refuse, ensuring zero leaking of $v_i^*$.

\textit{Remark.}
With UA-Rule 1--3, the User Agent remains a faithful simulator of progressive user behavior while preventing information leakage.

\subsection{System Agent}

The System Agent is responsible for translating NLQ into a candidate VQL. At the $i$-th round, it receives the dialogue status $\mathcal{S}_i$, which contains the dialog history of the previous $i-1$ rounds $\mathcal{H}_{i-1}$ and the new follow-up NLQ $q_i$, together with the database schema $\mathcal{D}$. Based on these inputs, the agent generates a preliminary VQL $v_i^{gen}$:
\begin{equation}
    v^{gen}_i = \mathcal{F}(\mathcal{S}_i, \mathcal{D}),
\end{equation}
where $\mathcal{F}(\cdot)$ denotes the LLM-based translation process. The database schema $\mathcal{D}$ is injected into the prompt to restrict the generation space, ensuring that the generated VQL $v_i^{gen}$ is grounded in the target database.

\begin{tcolorbox}[morandibox, title={Prompt for System Agent}]
\textbf{\#\#\# Task}\\
\# Given a \textcolor{font}{Natural Language Question} and \textcolor{font}{conversation context}, 
generate VQL based on their corresponding \textcolor{font}{Database Schemas}.

\textbf{\#\#\# Database Schemas:}\\
\textcolor{font}{\{Database Schema $\mathcal{D}$\}}

\textbf{\#\#\# Please follow the VQL Format Guidelines}\\
\textcolor{font}{\{VQL Format Guidelines\}}

\textbf{\#\#\# Natural Language Question}\\
\#\# \textbf{Round} \textcolor{font}{\{round\_id\ $i$\}}\\
\# \textbf{User:} \textcolor{font}{\{NLQ $q_i$\}}\\
\# \textbf{System:} \textcolor{font}{[Output VQL]}

\textbf{\#\#\# Output}\\
Generate only the VQL, no explanation. Format your response as:\\[1mm]
'''VQL

[Your VQL query here]\\
'''
\end{tcolorbox}

The prompt template used in the $i$-th round of dialog in the System Agent is shown above. It explicitly specifies the agent’s task, provides the relevant database schema $\mathcal{D}$, and enforces a strict output format for the generation of VQL $v_i^{gen}$ based on the input NLQ $q_i$.

\subsection{Validation Agent}\label{sec:validation-agent}

The Validation Agent is designed to validate and repair the candidate VQL $v^{gen}_i$ generated by the System Agent, and output a clarified VQL $v^{cla}_i$. It follows a ReAct-style loop, iteratively producing an internal thought, calling a tool action, receiving an observation, and updating the candidate VQL optionally until termination.

\textit{ReAct loop.}
At validation step $t$, the  Validation Agent maintains:
\begin{equation}
k_i^{(t)} \;=\; \{v^{(t)}_i,\ \mathcal{S}_i\}
\end{equation}
Specifically, the initial $k_i^{(0)}=\{v^{gen}_i,\mathcal{S}_i\}$. It performs a ReAct loop as:
\begin{align}
\textbf{(Thought)}\quad &\tau_t \;=\; \Omega\!\big(k_i^{(t)};\mathcal{D}\big), \label{eq:va-thought}\\
\textbf{(Action)}\quad  &a_t \;=\; \Pi\!\big(k_i^{(t)},\tau_t;\mathcal{D}\big), \label{eq:va-action}\\
\textbf{(Observation)}\quad    &o_t \;=\; a_t\!\big(k_i^{(t)};\mathcal{D}\big), \label{eq:va-obs}\\
\textbf{(Optional Update)}\quad  &k_i^{(t+1)} \;=\; \mathcal{U}\!\big(k_i^{(t)},\tau_t,a_t,o_t;\mathcal{D}\big), \label{eq:va-update}
\end{align}
and outputs
\begin{equation}
v^{cla}_i \;=\; \pi_V(k_i^{(t)}),
\end{equation}
where $\pi_V(v,\mathcal{S})=v$ projects the VQL component.

\textit{Tools.}
We model each tool as a predicate-valued mapping that returns a boolean verdict and a diagnostic message $\epsilon$.
Formally, a tool is a mapping
\begin{equation}
\mathcal{A}: \mathcal{V}\times\mathcal{S}\times\mathcal{D}\ \rightarrow\ \{0,1\}\times \mathcal{E},
\end{equation}
where $\mathcal{E}$ denotes the diagnostic space (e.g., success or failure record).

\textit{(1) Syntax Validator.}
Define the syntax predicate mapping: $a_\mathrm{syn}:\mathcal{V}\rightarrow\{0,1\}\times\mathcal{E}:$
\begin{equation}
a_\mathrm{syn}(v)\;=\;\Big(\mathbb{I}[v\in \mathcal{L}(\mathcal{G})],\ \epsilon_{syn}\Big),
\end{equation}
where $\mathbb{I}[\cdot]$ is indicator function and $\mathcal{L}(\mathcal{G})$ denotes the VQL language defined by grammar $\mathcal{G}$. Denote the boolean verdict as
$r^{syn}_t \;=\; \pi_B\!\big(a_\mathrm{syn}(v^{(t)}_i)\big)$,
where $\pi_B(b,\epsilon)=b$ projects the boolean component.

\textit{(2) Schema Validator.}
Define the schema-grounding predicate mapping $a_\mathrm{sch}:\mathcal{V}\times\mathcal{D}\rightarrow \{0,1\}\times\mathcal{E}$:
\begin{equation}
a_\mathrm{sch}(v,\mathcal{D}) \;=\;
\Big(\mathbb{I}[\mathrm{Tabs}(v)\subseteq\mathcal{R}(\mathcal{D})\wedge \mathrm{Cols}(v)\subseteq\mathcal{C}(\mathcal{D})],\ \epsilon_{sch}\Big)
\end{equation}
where $\text{Cols}(\cdot)$ and $\text{Tabs}(\cdot)$ extract the referenced columns and tables, respectively. 
Denote the boolean verdict
$r^{sch}_t \;=\; \pi_B\!\big(a_\mathrm{sch}(v^{(t)}_i,\mathcal{D})\big)$.

\textit{(3) SQL Executor.}
Define the execution predicate mapping $a_\mathrm{exec}:\mathcal{V}\times\mathcal{D}\rightarrow \{0,1\}\times\mathcal{E}$:
\begin{equation}
a_\mathrm{exec}(v,\mathcal{D})\;=\;\Big(\mathbb{I}[\mathrm{Exec}(v,\mathcal{D})],\ \epsilon_{sql}\Big),
\end{equation}
where $\text{Exec}(\cdot)$ denotes the execution of the SQL part of VQL.

\textit{(4) Intent Matcher.}
Define an ambiguity-detection predicate mapping
$a_{\mathrm{int}}:\mathcal{S}\times\mathcal{D}\rightarrow \{0,1\}\times\mathcal{E}$:
\begin{equation}
a_{\mathrm{int}}(\mathcal{S},\mathcal{D})
\;=\;
\Big(\mathbb{I}[\mathrm{Amb}(\mathcal{S},\mathcal{D})],\ \epsilon_{\mathrm{amb}}\Big),
\end{equation}
where $\mathrm{Amb}(\cdot)$ indicates whether the current status $S_i$ is intent-ambiguous under schema $\mathcal{D}$.
If ambiguity is detected, the Validation Agent queries the User Agent for a non-leaking clarification (reference to \ UA-Rule~2):
\begin{equation}
\mathbb{I}[\mathrm{Amb}(\mathcal{S}_i,\mathcal{D})]=1
\ \Rightarrow\
q_i^{clr}=\mathrm{UserAgent}(\phi_i)\in\mathcal{Q}^{clr}\cup\{\bot\},
\end{equation}
where $\phi_i$ denotes clarification request issued by Validation Agent.

\textit{Tool permission predicate.}
Define a state-dependent permission predicate
$\delta:\mathcal{K}\times\mathcal{A}\rightarrow\{0,1\}$,
where $\delta(k_i^{(t)},a)=1$ indicates that action/tool $a$ is permitted to be selected at state $k_i^{(t)}$.

\textit{Validation Agent (VA) rules.}
We constrain the admissible reasoning--acting process by the following rules.

\noindent\textbf{VA-Rule 1 (Bounded Loop):}
$\exists T\le m,\quad v^{cla}_i=v^{(T)}_i.$

\noindent\textit{Explanation.} The Validation Agent must terminate after at most $m$ ReAct steps.

\noindent\textbf{VA-Rule 2 (Syntax-first): }
$\forall t, r^{syn}_t=0 \Rightarrow\
\delta(c_t,a_{\mathrm{sch}}) \land\
\delta(c_t,a_{\mathrm{exec}}) \land\
\delta(c_t,a_{\mathrm{int}})=0.$

\noindent\textit{Explanation.} If syntax is invalid, the agent is forbidden to select schema/execution/intent actions at this step, and must prioritize syntax repair.

\noindent\textbf{VA-Rule 3 (Schema-before-Execution):}

\noindent$\forall t,\quad r^{sch}_t=0\ \Rightarrow\ \delta(c_t,a_{\mathrm{exec}})=0.$

\noindent\textit{Explanation.} SQL execution is only permitted after schema grounding succeeds.

\noindent\textbf{VA-Rule 4 (Observation-guided Update).}
Define a repair mapping
\begin{equation}
\mathcal{U}:\mathcal{K}\times\mathcal{T}\times\mathcal{A}\times\mathcal{O}\times\mathcal{D}\ \rightarrow\ \mathcal{V},
\end{equation}
which produces the next candidate by conditioning on the current candidate, dialogue status, and the tool feedback:
\begin{equation}
\forall t,\quad 
v^{(t+1)}_i \;=\; \mathcal{U}\!\big(k^{(t)}_i,\ \tau_t,\ a_t,\ o_t;\ \mathcal{D}\big).
\end{equation}
\textit{Explanation.} The tool used and feedback is incorporated to guide the repair that yields the next candidate $v^{(t+1)}_i$.

\begin{algorithm}[t]
\caption{PMVisAgent Algorithm}
\label{alg:pmvisagent}
\KwIn{NLQ sequence $\mathcal{Q}=\{q_1,\ldots,q_n\}$; database schema $\mathcal{D}$; tool set $\mathcal{A}$; max iterations $m$}
\KwOut{Validated VQL sequence $\mathcal{V}^{cla}=\{v^{cla}_1,\ldots,v^{cla}_n\}$}

\BlankLine
\textbf{Initialize:} $\mathcal{H}_0 \gets \emptyset$; $\mathcal{V}^{cla} \gets [\,]$\;

\SetKwProg{Pr}{Procedure}{:}{}
\Pr{\textsc{PMVisAgent$(\mathcal{Q}, \mathcal{D}, \mathcal{A}, m$)}}{
\For{$i \gets 1$ \KwTo $n$}{
  \tcp{\color{darkblue} User Agent issues NLQ}
  $q_i \gets \mathcal{Q}[i]$\;
  $\mathcal{S}_i \gets \mathcal{H}_{i-1} \cup \{q_i\}$\;

  \tcp{\color{darkblue} System Agent translates NLQ to candidate VQL}
  $p \gets \textsc{Prompt-Maker}(\mathcal{S}_i,\mathcal{D})$\;
  
  $v^{gen}_i \gets \textsc{Call-LLM}(p)$
  
  \tcp{\color{darkblue} Validation Agent validates and refines}
  $v^{cla}_i \gets \textsc{Validation-Agent}(q_i,v^{gen}_i,\mathcal{D},\mathcal{A},m)$\;

  $\mathcal{V}^{cla}$.append($v^{cla}_i$)\;
  $\mathcal{H}_i \gets \mathcal{H}_{i-1} \cup \{(q_i, v^{cla}_i)\}$\;
}
\Return $\mathcal{V}^{cla}$\;
}
\BlankLine
\SetKwProg{Fn}{Function}{:}{}

\Fn{\textsc{Validation-Agent}$(q,v,\mathcal{D},\mathcal{A},m)$}{
  $\mathcal{O}\gets[\,]$\;
  \For{$t \gets 1$ \KwTo $m$}{
    $\mathcal{R} \gets \textsc{Call-LLM}(q, v, \mathcal{D}, \mathcal{O})$\;
    $v_i^{cla}$ = \textsc{Extract-Final-Answer}($\mathcal{R}$)\;
    \uIf{$v_i^{cla}$}{
        \Return $v_i^{cla}$
    }
    \Else{
        $\mathcal{A} \gets \textsc{Extract-Action}(\mathcal{R})$\;
        $\mathcal{O} \gets \textsc{Call-Tool}(\mathcal{A}, q, v,\mathcal{D})$\;
        $v_{upd} \gets \textsc{Extract-Update-VQL}(\mathcal{R})$\;
        \uIf{$v_{upd}$}{
            $v \gets v_{upd}$
        }
    }
  }
  \Return $v$
}
\label{algoritm:PMVisAgent}

\end{algorithm}

Algorithm \ref{algoritm:PMVisAgent} outlines the overall workflow of PMVisAgent. For each NLQ issued by User Agent, the System Agent first generates a candidate VQL through prompt construction and LLM inference. The Validation Agent then iteratively calls LLM to validate the VQL. If a final answer is detected in the response, it is returned directly; otherwise, a detected tool is executed, and the candidate VQL is refined until validation is complete or the iteration limit is reached. 

\subsection{Formal Guarantees}
Our PMVisAgent framework provides mathematical guarantees through the following fundamental theorems.

\begin{theorem}[Tool-action Precedence Invariant]
\label{thm:precedence}
Along any validation trace in any round $i$, the following precedence constraints hold:
\begin{align}
a_t \in \{a_{sch},a_{exec},a_{int}\} \ &\Rightarrow\ r_t^{syn}=1, \label{eq:precedence_syn}\\
a_t = a_{exec} \ &\Rightarrow\ r_t^{sch}=1. \label{eq:precedence_sch}
\end{align}
Equivalently, schema validation / SQL execution / intent matching can only be taken when the current candidate
is syntactically valid; and SQL execution can only be taken when schema grounding succeeds.
\end{theorem}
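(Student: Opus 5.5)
The argument is a direct contrapositive reading of VA-Rule 2 and VA-Rule 3, combined with the fact that every action actually executed along a validation trace must be permitted by the permission predicate $\delta$. First I make explicit the admissibility assumption implicit in the definition of the ReAct loop: the action selected in \eqref{eq:va-action} satisfies $\delta(k_i^{(t)},a_t)=1$ at every step $t$, i.e., $\Pi$ chooses only among tools permitted at the current state. I identify the state symbol $c_t$ in VA-Rules 2--3 with $k_i^{(t)}$, and I read the conjunction in VA-Rule 2 as requiring each of $\delta(c_t,a_{\mathrm{sch}})$, $\delta(c_t,a_{\mathrm{exec}})$, $\delta(c_t,a_{\mathrm{int}})$ to equal $0$. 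Finally, I note that $r_t^{syn}$ and $r_t^{sch}$ are functions of the current candidate $v_i^{(t)}$ and $\mathcal{D}$ alone, via $\pi_B$ of the deterministic predicates $a_{\mathrm{syn}}$ and $a_{\mathrm{sch}}$. They are therefore well defined at every step, whether or not the corresponding tool was actually called.

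For \eqref{eq:precedence_syn}, fix any round $i$, any step $t\le T\le m$ (the trace is finite by VA-Rule 1), and suppose $a_t\in\{a_{\mathrm{sch}},a_{\mathrm{exec}},a_{\mathrm{int}}\}$. Assume for contradiction that $r_t^{syn}=0$. VA-Rule 2 then gives $\delta(k_i^{(t)},a)=0$ for all three such $a$, so in particular $\delta(k_i^{(t)},a_t)=0$. This contradicts admissibility, and hence $r_t^{syn}=1$. For \eqref{eq:precedence_sch}, suppose $a_t=a_{\mathrm{exec}}$ and $r_t^{sch}=0$. VA-Rule 3 gives $\delta(k_i^{(t)},a_{\mathrm{exec}})=0$, which again contradicts admissibility. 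Since $i$, $t$, and the trace were arbitrary, and both rules hold at every step independently of earlier updates made by $\mathcal{U}$ under VA-Rule 4, the invariant holds along every trace. The informal equivalent reformulation in the statement is simply the contrapositive phrasing.

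The only real obstacle is definitional rather than mathematical. The paper states the $\delta$ predicate and the rules, but the loop equations do not literally write $\delta(k_i^{(t)},a_t)=1$ as a constraint on $\Pi$. I would therefore state this as the operational meaning of ``permitted to be selected'' and treat it as the admissibility hypothesis of the ReAct policy. I would also remark that the invariant concerns the verdicts on the candidate at step $t$, before any update to $v_i^{(t+1)}$. This avoids any ambiguity about whether a repair performed in the same step could change $r_t^{syn}$.
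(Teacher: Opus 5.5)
Your proof is correct and follows the paper's argument: a contradiction between the admissibility of the selected action, $\delta(k_i^{(t)},a_t)=1$, and the zero permissions that VA-Rule~2 (resp.\ VA-Rule~3) forces when $r_t^{syn}=0$ (resp.\ $r_t^{sch}=0$). You state explicitly two assumptions the paper's proof uses only implicitly: the admissibility hypothesis on $\Pi$, and the componentwise reading of VA-Rule~2.
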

\begin{proof}[Proof sketch]
We prove this by contradiction and the permission predicate $\delta$.
If $a_t \in \{a_{sch},a_{exec},a_{int}\}$ is selected, then necessarily $\delta(k_i^{(t)},a_t)\\=1$.
Assume $r_t^{syn}=0$. By VA-Rule~2, we have
$\delta(k_i^{(t)},a_{sch})=\delta(k_i^{(t)},a_{exec})=\delta(k_i^{(t)},a_{int})=0$,
contradicting $\delta(k_i^{(t)},a_t)=1$. Hence $r_t^{syn}=1$, proving Eq.~(\ref{eq:precedence_syn}).
Similarly, if $a_t=a_{exec}$ and assume $r_t^{sch}=0$, VA-Rule~3 gives
$\delta(k_i^{(t)},a_{exec})=0$, contradicting $\delta(k_i^{(t)},a_{exec})=1$.
Hence $r_t^{sch}=1$, proving Eq.~(\ref{eq:precedence_sch}).
\end{proof}

\begin{corollary}[Safe SQL Execution]
\label{cor:safe_exec}
Whenever the Validation Agent executes SQL at step $t$ (i.e., $a_t=a_{exec}$), the executed candidate $v_i^{(t)}$ is
(i) syntactically valid and (ii) schema-grounded:
\begin{align}
a_t=a_{exec} \Rightarrow\ v_i^{(t)}\in \mathcal{L}(\mathcal{G}) \wedge\ Tabs(v_i^{(t)})\subseteq \mathcal{R}(\mathcal{D}) \nonumber\\
\wedge\ Cols(v_i^{(t)})\subseteq \mathcal{C}(\mathcal{D}). \nonumber
\end{align}
\end{corollary}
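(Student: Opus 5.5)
The corollary follows directly from Theorem~\ref{thm:precedence} once the boolean verdicts are unpacked through the tool definitions. Fix a round $i$ and a step $t$ of the validation trace with $a_t=a_{exec}$.

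\textbf{Syntactic validity.} Since $a_{exec}\in\{a_{sch},a_{exec},a_{int}\}$, Eq.~(\ref{eq:precedence_syn}) of Theorem~\ref{thm:precedence} gives $r_t^{syn}=1$. By definition,
\[
r_t^{syn}=\pi_B\big(a_{\mathrm{syn}}(v_i^{(t)})\big)=\mathbb{I}\big[v_i^{(t)}\in\mathcal{L}(\mathcal{G})\big].
\]
An indicator equals $1$ exactly when its argument holds, so $v_i^{(t)}\in\mathcal{L}(\mathcal{G})$. This is item (i).

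\textbf{Schema grounding.} Eq.~(\ref{eq:precedence_sch}) of the same theorem gives $r_t^{sch}=1$. By the definition of the Schema Validator,
\[
r_t^{sch}=\pi_B\big(a_{\mathrm{sch}}(v_i^{(t)},\mathcal{D})\big)=\mathbb{I}\big[\mathrm{Tabs}(v_i^{(t)})\subseteq\mathcal{R}(\mathcal{D})\wedge\mathrm{Cols}(v_i^{(t)})\subseteq\mathcal{C}(\mathcal{D})\big].
\]
The indicator equals $1$, so both inclusions hold. This is item (ii). Taking the conjunction of (i) and (ii) gives the displayed implication.

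\textbf{The one point needing care.} The verdicts $r_t^{syn}$ and $r_t^{sch}$ must refer to the same candidate $v_i^{(t)}$ that the executor runs at step $t$, not to an earlier or later candidate. The observation in Eq.~(\ref{eq:va-obs}) is computed on $k_i^{(t)}$, and by Eq.~(\ref{eq:va-update}) (VA-Rule~4) the candidate changes only at the transition from $t$ to $t+1$. The verdicts are defined at step $t$ on $v_i^{(t)}$. Hence the predicates used in Theorem~\ref{thm:precedence} and the executed query coincide, and the argument applies to the executed candidate.
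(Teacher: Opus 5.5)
Your proof is correct and matches the paper's argument: apply Theorem~\ref{thm:precedence} to get $r_t^{syn}=1$ and $r_t^{sch}=1$, then unpack the indicator definitions of the syntax and schema validators. Your extra remark, that the verdicts and the executed query refer to the same candidate $v_i^{(t)}$, is a harmless clarification that the paper leaves implicit, since $r_t^{syn}$ and $r_t^{sch}$ are defined directly on $v_i^{(t)}$.
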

\begin{proof}[Proof sketch]
By Theorem~\ref{thm:precedence}, $a_t=a_{exec}$ implies $r_t^{syn}=1$ and $r_t^{sch}=1$.
By the tool semantics, $r_t^{syn}=1$ indicates $v_i^{(t)}\in \mathcal{L}(\mathcal{G})$ (syntax validator),
and $r_t^{sch}=1$ indicates $Tabs(v_i^{(t)})\subseteq \mathcal{R}(\mathcal{D})$ and
$Cols(v_i^{(t)})\subseteq \mathcal{C}(\mathcal{D})$ (schema validator).
Combining them yields the claim.
\end{proof}

\begin{theorem}[Bounded Tool-call Complexity]
\label{thm:tool_bound}
Across an $n$-round dialogue, the total number of tool invocations issued by the Validation Agent is upper-bounded by $nm$.
\end{theorem}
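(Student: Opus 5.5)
The plan is to decompose the count by rounds and then bound each round using VA-Rule~1 together with the structure of a ReAct step. Write $N_i$ for the number of tool invocations the Validation Agent issues in round $i$. By the workflow in Algorithm~\ref{alg:pmvisagent}, each round $i=1,\dots,n$ calls \textsc{Validation-Agent} exactly once. The User and System Agents issue no tool calls: the System Agent only calls the LLM, and the User Agent only answers clarification requests. Any clarification query is itself triggered inside an intent-matcher action $a_{int}$, so it is already counted by that action. Hence the total number of tool invocations is $N=\sum_{i=1}^n N_i$, and it suffices to show that $N_i\le m$ for every $i$.

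For the per-round bound, I will use the ReAct formulation in Eqs.~(\ref{eq:va-thought})--(\ref{eq:va-update}). Each validation step $t$ consists of one thought $\tau_t$, at most one action $a_t$, one observation $o_t$ and an optional update. So each step contributes at most one tool invocation. In Algorithm~\ref{alg:pmvisagent}, a step either returns a final answer, which makes no tool call, or executes exactly one \textsc{Call-Tool}. VA-Rule~1 says the trace terminates at some $T\le m$ with $v_i^{cla}=v_i^{(T)}$, so at most $m$ ReAct steps occur. The loop \texttt{for} $t\gets1$ \texttt{to} $m$ also enforces this syntactically. Therefore $N_i\le T\le m$.

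Summing over the rounds gives $N=\sum_{i=1}^n N_i\le nm$. One could also remark that Theorem~\ref{thm:precedence} only restricts \emph{which} tool is chosen at a step and never adds extra invocations. So the precedence constraints are compatible with the bound and do not affect it.

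The main obstacle is the modelling assumption that one ReAct step issues at most one tool call. Specifically, we must rule out a single action $a_t$ that bundles several tools, and we must confirm that the clarification exchange with the User Agent does not count as an additional tool invocation. I would handle this by appealing to the definition of the action $a_t$ as a single element of the tool set $\mathcal{A}$, chosen by $\Pi$ in Eq.~(\ref{eq:va-action}), and by treating $\mathrm{UserAgent}(\phi_i)$ as part of the observation $o_t$ of $a_{int}$. If multi-tool actions were allowed, the argument would only give a bound of $nm|\mathcal{A}|$, so this convention needs to be stated explicitly.
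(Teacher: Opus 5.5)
Your proposal is correct and follows essentially the same route as the paper: fix a round, use VA-Rule~1 to bound the trace length by $T\le m$, and use the control flow of Algorithm~\ref{alg:pmvisagent} (one selected action per ReAct step) to get $N_i\le T\le m$, then sum over the $n$ rounds to obtain $nm$. Your extra bookkeeping is a reasonable way to make explicit assumptions that the paper's sketch leaves implicit: no tool calls come from the User or System Agents, the clarification exchange is folded into the observation of $a_{int}$, and the one-tool-per-step convention is stated outright.
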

\begin{proof}[Proof sketch]
Fix a round $i$. By VA-Rule~1, the validation trace length satisfies $T\le m$.
From the control flow of Algorithm~1, each ReAct step executes a single selected action $a_t$,
hence each step triggers at most one external tool invocation.
Let $\mathbb{I}[t]\in\{0,1\}$ indicate whether step $t$ invokes a tool.
Then the number of tool invocations in round $i$ is
\begin{equation}
N_i \triangleq \sum_{t=1}^{T}\mathbb{I}[t] \ \le\ \sum_{t=1}^{T}1 \ =\ T \ \le\ m.
\end{equation}
Summing over $n$ rounds yields $\sum_{i=1}^{n}N_i \le \sum_{i=1}^{n} m = nm$.
\end{proof}

\begin{corollary}[Time Bound of Validation]
\label{cor:time_bound}
Let $x_{\textsc{llm}}$ be an upper-bound on the time of producing $(\tau_t,a_t)$ in one ReAct step,
and let $c_{\max}$ be an upper-bound on the time of one tool invocation.
Let $N_i$ be the number of tool invocations issued in round $i$.
Then the worst-case validation time satisfies
\begin{align}
\textsc{Time}(i) &\le m\cdot x_{\textsc{llm}} + N_i\cdot x_{\max}, \label{eq:time_decomp}\\
\sum_{i=1}^{n}\textsc{Time}(i) &\le nm\cdot x_{\textsc{llm}} + nm\cdot x_{\max}
= nm\big(x_{\textsc{llm}}+x_{\max}\big). \label{eq:time_total}
\end{align}
\end{corollary}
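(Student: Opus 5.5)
The plan is to bound the time of a single round by splitting every ReAct step into its LLM part and its tool part. We then sum over rounds using Theorem~\ref{thm:tool_bound}. Throughout we read $x_{\max}$ in the statement as the per-invocation tool cost bound $c_{\max}$, since the two symbols denote the same quantity.

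\emph{Per-round bound.} Fix a round $i$. By VA-Rule~1 the validation trace has length $T\le m$. Following the control flow of the \textsc{Validation-Agent} function in Algorithm~\ref{alg:pmvisagent}, each iteration makes exactly one \textsc{Call-LLM} to produce $(\tau_t,a_t)$, costing at most $x_{\textsc{llm}}$. It then makes either no tool call (when a final answer is extracted) or a single \textsc{Call-Tool}, costing at most $x_{\max}$. The remaining operations (extraction, assignment of $v_{upd}$) are treated as negligible or absorbed into these constants; this modelling assumption should be stated explicitly.

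We then write the round time as a sum over steps, $\textsc{Time}(i)=\sum_{t=1}^{T}\big(\text{LLM}_t+\mathbb{I}[t]\cdot\text{Tool}_t\big)$, using the same indicator $\mathbb{I}[t]$ as in the proof of Theorem~\ref{thm:tool_bound}. Bounding each term gives
$$\textsc{Time}(i)\le T\,x_{\textsc{llm}}+\Big(\sum_{t=1}^{T}\mathbb{I}[t]\Big)x_{\max}\le m\,x_{\textsc{llm}}+N_i\,x_{\max},$$
which is Eq.~(\ref{eq:time_decomp}).

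\emph{Summation.} We sum over $i=1,\dots,n$. The first term contributes $nm\,x_{\textsc{llm}}$. For the second term, Theorem~\ref{thm:tool_bound} gives $\sum_i N_i\le nm$, and alternatively $N_i\le m$ holds for each round. Hence $\sum_i N_i\,x_{\max}\le nm\,x_{\max}$, and factoring yields Eq.~(\ref{eq:time_total}).

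The only real obstacle is the modelling step: justifying that a step's cost decomposes additively into at most one LLM call and at most one tool call, with nothing else unbounded. This relies on VA-Rule~1 and on the single-action-per-step structure of Algorithm~\ref{alg:pmvisagent}. If user clarification through $a_{int}$ is counted as a tool invocation, its latency must also be assumed to fall under $x_{\max}$. I would state this as an explicit assumption rather than try to derive it.
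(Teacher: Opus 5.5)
Your proposal is correct and follows essentially the same route as the paper. Both bound each round by at most $m$ LLM steps (VA-Rule~1) plus $N_i$ tool calls each costing at most $x_{\max}$, then sum over rounds and use Theorem~\ref{thm:tool_bound} for $\sum_i N_i\le nm$; your reading of $x_{\max}$ as $c_{\max}$ and your explicit assumptions (negligible extraction overhead, clarification latency covered by $x_{\max}$) make precise what the paper's sketch leaves implicit.
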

\begin{proof}[Proof sketch]
In round $i$, there are at most $T\le m$ steps (VA-Rule~1), and each step costs at most $x_{\textsc{llm}}$,
thus the total reasoning time is at most $m\cdot x_{\textsc{llm}}$.
Moreover, each tool invocation costs at most $x_{\max}$, so tool time is at most $N_i\cdot x_{\max}$,
which proves Eq.~(\ref{eq:time_decomp}).
Summing Eq.~(\ref{eq:time_decomp}) over $i=1,\dots,n$ gives
\[
\sum_{i=1}^{n}\textsc{Time}(i)\ \le\ nm\cdot x_{\textsc{llm}} + \Big(\sum_{i=1}^{n}N_i\Big)\cdot x_{\max}.
\]
By Theorem~\ref{thm:tool_bound}, $\sum_{i=1}^{n}N_i \le nm$, yielding Eq.~(\ref{eq:time_total}).
\end{proof}

\begin{theorem}[System Reliability]
\label{thm:reliability}
For any $n$-round dialogue execution context $\Gamma$, PMVisAgent satisfies overall system reliability:
\begin{equation}
\mathcal{Y}(\Gamma)
\;\triangleq\;
\Theta_{\text{safety}}(\Gamma) \ \wedge\ \Theta_{\text{tool}}(\Gamma) \ \wedge\ \Theta_{\text{time}}(\Gamma),
\label{eq:reliability}
\end{equation}
where $\Theta_{\text{safety}}(\Gamma)$ denotes execution safety, $\Theta_{\text{tool}}(\Gamma)$ denotes bounded tool-call complexity, and $\Theta_{\text{time}}(\Gamma)$ denotes bounded validation time.
\end{theorem}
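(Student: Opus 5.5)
The plan is to prove $\mathcal{Y}(\Gamma)$ one conjunct at a time. Each conjunct is essentially a restatement of a result already established: Corollary~\ref{cor:safe_exec}, Theorem~\ref{thm:tool_bound} and Corollary~\ref{cor:time_bound}. So the main work is to fix precise definitions for $\Theta_{\text{safety}}$, $\Theta_{\text{tool}}$ and $\Theta_{\text{time}}$ over an arbitrary $n$-round context $\Gamma$, and then discharge each one by invoking the matching earlier result uniformly over all rounds $i=1,\dots,n$ and all steps $t\le T_i$.

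\textbf{Definitions.} I would take the three predicates as follows.
\begin{itemize}
\item $\Theta_{\text{safety}}(\Gamma)$: for every round $i$ and every step $t$ of its validation trace, $a_t=a_{exec}$ implies $v_i^{(t)}\in\mathcal{L}(\mathcal{G})$, $\mathrm{Tabs}(v_i^{(t)})\subseteq\mathcal{R}(\mathcal{D})$ and $\mathrm{Cols}(v_i^{(t)})\subseteq\mathcal{C}(\mathcal{D})$.
\item $\Theta_{\text{tool}}(\Gamma)$: $\sum_{i=1}^n N_i\le nm$.
\item $\Theta_{\text{time}}(\Gamma)$: $\sum_{i=1}^n \textsc{Time}(i)\le nm(x_{\textsc{llm}}+x_{\max})$.
\end{itemize}
I will read the $c_{\max}$ in Corollary~\ref{cor:time_bound} as the same quantity as $x_{\max}$.

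\textbf{Safety.} The trace of each round is finite: Algorithm~\ref{alg:pmvisagent} runs at most $m$ iterations, consistent with VA-Rule~1. Every selected action must be permitted, i.e.\ $\delta(k_i^{(t)},a_t)=1$. So Theorem~\ref{thm:precedence} applies at each step, and Corollary~\ref{cor:safe_exec} gives the safety condition pointwise. Quantifying over the finitely many $(i,t)$ yields $\Theta_{\text{safety}}(\Gamma)$.

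\textbf{Tool and time bounds.} The tool bound is exactly Theorem~\ref{thm:tool_bound}. For the time bound, I sum Eq.~(\ref{eq:time_decomp}) over rounds and substitute the tool bound, as in Corollary~\ref{cor:time_bound}, to get Eq.~(\ref{eq:time_total}). Since $\Gamma$ was arbitrary, the conjunction holds for every context.

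\textbf{Main obstacle.} The delicate point is not the arithmetic but making sure the three guarantees apply to the same execution $\Gamma$ without hidden assumptions. Two things need care:
\begin{itemize}
\item The earlier results are stated for a single validation trace. I must argue that the rounds are independent with respect to the VA-Rules. The history update only appends $(q_i,v_i^{cla})$, and by UA-Rule~1 no intermediate trace leaks into the next round. Hence each round's trace starts fresh from $k_i^{(0)}$, and the per-round hypotheses hold in every round.
\item The time bound must hold even when clarification is requested through $a_{int}$. I would treat a User Agent response as one tool invocation bounded by $x_{\max}$. That is legitimate because UA-Rules~2--3 make $\mathrm{Clarify}$ a single-shot map returning an element of $\mathcal{Q}^{clr}\cup\{\bot\}$, with no further interaction.
\end{itemize}
With these two observations the three conjuncts compose directly.
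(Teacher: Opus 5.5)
Your proposal is correct and takes essentially the same route as the paper. It proves $\mathcal{Y}(\Gamma)$ conjunct by conjunct: safety via Theorem~\ref{thm:precedence} and Corollary~\ref{cor:safe_exec}, the tool bound via Theorem~\ref{thm:tool_bound}, and the time bound via Corollary~\ref{cor:time_bound}. Your explicit predicate definitions, the reading of $c_{\max}$ as $x_{\max}$, and your remarks on per-round freshness and on counting a clarification as one tool invocation only make explicit what the paper leaves implicit (the per-round point is already covered because the VA-Rules hold for every round and step).
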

\begin{proof}[Proof sketch]
Define system reliability as the logical conjunction in Eq.~(\ref{eq:reliability}). It satisfies: 
\textbf{(1) Safety.}
$\Theta_{\text{safety}}(\Gamma)$ requires that any SQL execution in any round/step is performed only on a syntactically valid and schema-grounded
candidate. This follows from the tool-action precedence invariant (Theorem~\ref{thm:precedence})
and Safe SQL Execution (Corollary~\ref{cor:safe_exec}), which together quantify over all execution steps.
\textbf{(2) Bounded tool calls.}
$\Theta_{\text{tool}}(\Gamma)$ states that the total number of tool invocations across $n$ rounds is upper-bounded by $nm$.
This is exactly Theorem~\ref{thm:tool_bound}, derived from VA-Rule~1 ($T\le m$) and Algorithm~1's single-action control flow.
\textbf{(3) Bounded time.}
$\Theta_{\text{time}}(\Gamma)$ states that the overall runtime is bounded by the sum of (i) per-step reasoning time and
(ii) per-tool invocation time, with the number of invocations bounded by $\Theta_{\text{tool}}(\Gamma)$.
This is established in Corollary~\ref{cor:time_bound}, which uses Theorem~\ref{thm:tool_bound} to bound $\sum_i N_i$.

Since each component holds for any execution trace under $\Gamma$, their conjunction yields
$\mathcal{Y}(\Gamma)$.
\end{proof}

\section{Experiments}

\subsection{Experimental Setup}

\subsubsection{Dataset}

All experiments were conducted on PMVisBench, the benchmark dataset introduced in Section~\ref{sec:PMVisBench}. PMVisBench is specifically designed for evaluating text-to-vis systems based on the PMVis paradigm and covers a wide range of progressive multi-turn refinement scenarios. 
Each sample consists of a sequence of NLQs and their corresponding expected VQLs, emphasizing the progressive nature of real-world user-system interactions.

\begin{table*}
\centering
\caption{Performance comparison of PMVisAgent against baseline models on PMVisBench, where $^*$ represents traditional neural network-based models and $^\dagger$ represents the LLM-based models. Results are reported separately for single-table (w/o join operations) and multi-table (w/ join operations) settings. The best results of each metric are highlighted in \textcolor{red}{\textbf{bold red}}, while the second-best results are in \textcolor{black}{\textbf{bold black}}.}
\resizebox{\linewidth}{!}{
\begin{tabular}{l|ccccc|ccccc}
\toprule
\multicolumn{1}{c|}{\multirow{2}{*}{Model}} & \multicolumn{5}{c|}{Single-Table (w/o join operation) (\%)} & \multicolumn{5}{c}{Multi-Table (w/ join operation) (\%)} \\ 
\multicolumn{1}{c|}{}                       & Vis ACC.   & Axis ACC.   & Data ACC.  & ACC.   & Exec. ACC.  & Vis ACC.   & Axis ACC.   & Data ACC.   & ACC.   & Exec. ACC.  \\ \midrule
Seq2Vis$^*$                                     & 75.13      & 1.39        & 0.44       & 0.00   & 0.38       & 73.90      & 2.46        & 0.64        & 0.00   & 0.96       \\
Transformer$^*$                                  & 92.82      & 14.55       & 14.55      & 6.61   & 9.95       & 91.87      & 19.25       & 14.65       & 9.30   & 14.22      \\
ncNet$^*$                                        & 97.04      & 17.00       & 18.01      & 11.08  & 13.92      & \textcolor{red}{\textbf{99.89}}      & 18.40       & 14.97       & 12.09  & 16.79      \\
RGVisNet$^*$                                     & 97.04          & 64.48           & 48.99          & 43.01      & 67.51          & 97.00           & \textbf{26.20}           & 18.83           & 15.51      & 49.63          \\
MMCoVisNet$^*$                                   & 97.86          & 37.85           & 26.69           & 23.68      & 44.52          & 97.75          & 20.00           & 12.78           & 13.05      & 31.98          \\
nvAgent$^\dagger$                                  & 93.07          & 51.20           & 49.94           & 36.46      & 67.13          & 95.83          & 0.00           & 0.53           & 0.00      & 52.83          \\
Prompt4Vis$^\dagger$                                  & 98.11          & 69.96           & 57.87           & 50.31      & 71.03          & 94.33          & \textcolor{red}{\textbf{31.87}}           & \textbf{28.98}           & \textcolor{red}{\textbf{20.43}}      & 54.65          \\
PMVisAgent (gpt-4o-mini)$^\dagger$                     & \textcolor{red}{\textbf{99.31}}      & \textbf{72.54}       & \textbf{75.44}      & \textbf{68.07}  & \textbf{83.31}      & \textbf{99.36}      & 6.52        & 18.82       & 5.13   & \textbf{65.35}      \\
PMVisAgent (gemini)$^\dagger$                         & 89.29      & 63.16       & 62.66      & 59.38  & 76.39      & 76.47      & 4.39        & 12.62       & 3.10      & 53.48          \\
PMVisAgent (qwen-plus)$^\dagger$                      & \textbf{98.80}          & \textcolor{red}{\textbf{86.96}}           & \textcolor{red}{\textbf{81.55}}          & \textcolor{red}{\textbf{79.85}}      & \textcolor{red}{\textbf{88.60}}          & 98.18          & 22.78           & \textcolor{red}{\textbf{32.94}}           & \textbf{17.01}      & \textcolor{red}{\textbf{77.86}}          \\ \bottomrule
\end{tabular}}
\label{tab:main result}
\end{table*}

\begin{table*}[!t]
\centering
\caption{Ablation results of PMVisAgent. This table shows the performance of PMVisAgent after removing the key components.}
\resizebox{\linewidth}{!}{
\begin{tabular}{l|ccccc|ccccc}
\toprule
\multicolumn{1}{c|}{\multirow{2}{*}{Model}} & \multicolumn{5}{c|}{Single Table (w/o join operation) (\%)}                        & \multicolumn{5}{c}{Multiple Tables (w/ join operation) (\%)}                        \\ 
\multicolumn{1}{c|}{}                       & Vis ACC.       & Axis ACC.      & Data ACC.      & ACC.           & Exec. ACC.      & Vis ACC.        & Axis ACC.      & Data ACC.      & ACC.           & Exec. ACC.      \\ \midrule
PMVisAgent (gpt4o-mini)                     & \textbf{99.31} & \textbf{72.54} & \textbf{75.44} & \textbf{68.07} & \textbf{83.31} & 99.36           & \textbf{6.52}  & \textbf{18.82} & \textbf{5.13}  & \textbf{65.35} \\
- Only Single-turn                          & 98.11          & 44.14          & 51.83          & 32.43          & 63.60          & 99.25           & 1.60           & 6.42           & 0.75           & 46.62          \\
- Only Multi-turn                           & 98.99          & 47.48          & 56.68          & 38.92          & 68.20          & \textbf{99.57}  & 1.39           & 5.99           & 0.64           & 51.12          \\
- Single-turn + Validation
& 98.17          & 64.11          & 71.22          & 59.51          & 78.97          & 98.40           & 4.49           & 16.36          & 2.89           & 63.42          \\ \midrule
PMVisAgent (gemini)                         & 89.29          & \textbf{63.16} & \textbf{62.66} & \textbf{59.38} & \textbf{76.39} & 76.47           & \textbf{4.39}  & \textbf{12.62} & \textbf{3.10}  & \textbf{53.48} \\
- Only Single-turn                          & \textbf{99.87} & 36.27          & 38.54          & 28.21          & 60.01          & \textbf{100.00} & 2.14           & 9.30           & 1.50           & 47.91          \\
- Only Multi-turn                           & 99.37          & 34.38          & 39.55          & 28.15          & 59.70          & 99.47           & 1.71           & 7.38           & 0.96           & 51.87          \\
- Single-turn + Validation                     & 87.47          & 51.32          & 54.22          & 48.36          & 72.61          & 74.12           & 3.64           & 11.02          & 2.99           & 50.91          \\ \midrule
PMVisAgent (qwen-plus)                      & 98.80          & \textbf{86.96} & \textbf{81.55} & \textbf{79.85} & \textbf{88.60} & 98.18           & \textbf{22.78} & \textbf{32.94} & \textbf{17.01} & \textbf{77.86} \\
- Only Single-turn                          & 99.81          & 51.70          & 47.61          & 32.62          & 64.99          & 99.79           & 4.18           & 11.35          & 0.96           & 49.79          \\
- Only Multi-turn                           & \textbf{99.94} & 54.22          & 60.20          & 45.34          & 72.67          & \textbf{99.89}  & 4.07           & 9.53           & 1.50           & 50.86          \\
- Single-turn + Validation                     & 99.12          & 78.02          & 75.88          & 70.28          & 85.20          & 97.22           & 13.92          & 23.77          & 8.14           & 66.17          \\ \bottomrule
\end{tabular}}
\label{tab:ablation}
\end{table*}

\begin{table*}[t]
\small
\centering
\caption{Case Study. VQLs generated by different baselines and PMVisAgent on an NLQ from PMVisBench, together with their corresponding visualization charts. The correct chart is marked with \ding{52}, while the error charts are marked with \ding{56}.}
\vspace{-5pt}
\begin{minipage}[t]{0.71\textwidth}
\vspace{0.5cm} 
\setlength{\tabcolsep}{4pt}
\renewcommand{\arraystretch}{1.1}
\begin{tabular}{>{\centering\arraybackslash}m{2cm}|m{0.78\linewidth}}
\toprule
NLQ         & Please compare the number of floors for all street address using a bar chart, and order by the number of floors in ascending.\\
\midrule
Target VQL  & Visualize BAR SELECT Street\_address , Floors FROM building ORDER BY floors $\to$ \textit{Figure (a) in Table III} \\
\midrule
Seq2Vis     & Visualize BAR SELECT name , COUNT(*) FROM products AS T1 JOIN ORDER BY floors $\to$ \textit{Syntax Error}\\
\midrule
Transformer & Visualize BAR SELECT Type , COUNT(*) FROM institution GROUP BY TYPE ORDER BY Type ASC $\to$ \textit{Figure (b) in Table III} \\
\midrule
ncNet       & Visualize BAR SELECT type , COUNT(type) FROM type ORDER BY type ASC $\to$ \textit{no such table: type} \\
\midrule
RGVisNet    & Visualize BAR SELECT Street\_address, Floors FROM building GROUP BY Street\_address ORDER BY COUNT(Floors) ASC $\to$ \textit{Figure (c) in Table III} \\
\midrule
MMCoVisNet & Visualize BAR SELECT name, COUNT(*) FROM institution ORDER BY COUNT(*) DASC $\to$ \textit{no such column: name} \\
\midrule
\multirow{2}{*}[0.75em]{\makecell{Prompt4Vis \\ \&  nvAgent}} & Visualize BAR SELECT Street\_address, COUNT(Floors) FROM building GROUP BY Street\_address ORDER BY COUNT(Floors) ASC $\to$ \textit{Figure (d) in Table III} \\
\midrule
PMVisAgent  & Visualize BAR SELECT Street\_address, Floors FROM building ORDER BY Floors $\to$ \textit{Figure (a) in Table III} \\
\bottomrule
\end{tabular}
\end{minipage}
\hfill
\begin{minipage}[t]{0.28\textwidth}
\vspace{0pt}
\centering
\begin{subcaptionblock}{0.38\linewidth}
  \includegraphics[width=\linewidth]{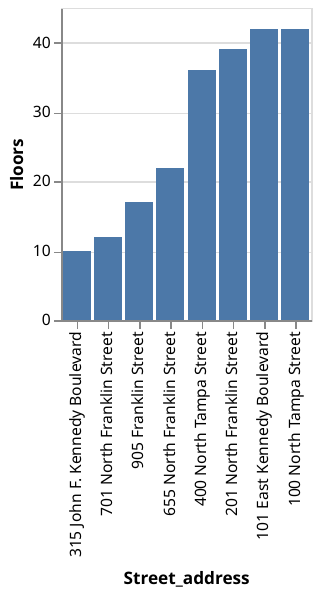}
  \caption{\textcolor{black}{\ding{52}}}
  \label{fig:case_gt}
\end{subcaptionblock}\hfill
\begin{subcaptionblock}{0.25\linewidth}
  \includegraphics[width=\linewidth]{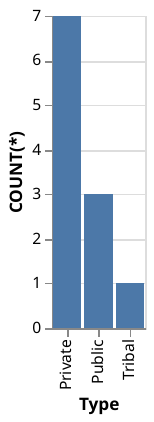}
  \caption{\textcolor{black}{\ding{56}}}
  \label{fig:case_transformer}
\end{subcaptionblock}

\medskip
\vspace{-0.2cm}
\begin{subcaptionblock}{0.43\linewidth}
  \includegraphics[width=\linewidth]{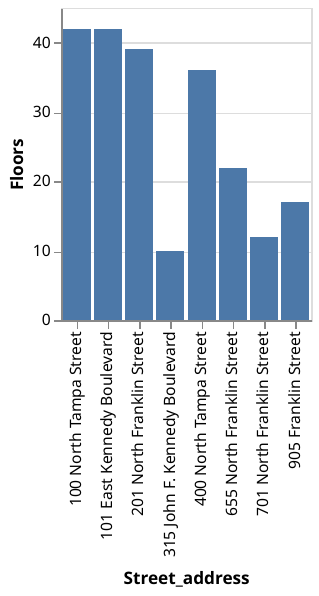}
  \caption{\textcolor{black}{\ding{56}}}
  \label{fig:case_rgvisnet}
\end{subcaptionblock}\hfill
\begin{subcaptionblock}{0.43\linewidth}
  \includegraphics[width=\linewidth]{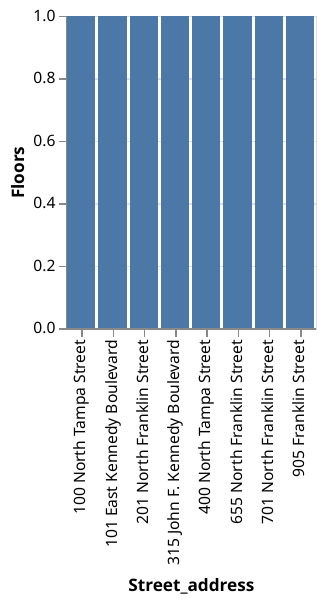}
  \caption{\textcolor{black}{\ding{56}}}
  \label{fig:case_prompt4vis}
\end{subcaptionblock}

\end{minipage}
\label{tab:case_study}
\vspace{-10pt}
\end{table*}

\subsubsection{Baselines}

To evaluate the effectiveness of PMVisAgent, we compare it with representative text-to-vis baselines, covering both traditional neural models and recent LLM-based methods:

\begin{itemize}[leftmargin=*]
    \item \textbf{Seq2Vis}~\cite{luo2021synthesizing}: A seq2seq baseline that directly translates NLQs into visualization specifications.

    \item \textbf{Transformer}~\cite{vaswani2017attention}: A standard Transformer encoder--decoder that generates visualization language with self-attention.

    \item \textbf{ncNet}~\cite{luo2021natural}: A Transformer-based text-to-vis model with visualization-aware optimizations (e.g., attention-forcing and rendering) and optional chart templates.

    \item \textbf{RGVisNet}~\cite{song2022rgvisnet}: A retrieval--generation framework that retrieves similar queries and refines them for the input NLQ and schema.

    \item \textbf{MMCoVisNet}~\cite{song2024marrying}: A multi-modal conversational network for CoVis, using a data-aware dialogue encoder with adaptive decoders for different response types.

    \item \textbf{Prompt4Vis}~\cite{li2025prompt4vis}: A state-of-the-art LLM-based framework with example mining and schema filtering for in-context VQL generation.
    
    \item \textbf{nvAgent}~\cite{ouyang-etal-2025-nvagent}: A collaborative agent workflow, including processor, composer, and validator for processing, planning, and verification, but with a fixed workflow.

    \item \textbf{PMVisAgent}: Our progressive multi-turn agent framework for text-to-vis. It incrementally refines VQLs across dialogue rounds and adopts a ReAct-style tool-use loop with explicit interaction and verification rules to validate and repair generated VQLs, mitigating error accumulation and improving execution correctness.
\end{itemize}

For transparency and fair comparison, all competing models are implemented following the experimental settings reported in their original papers whenever applicable.

\subsubsection{Evaluation Metrics}

Following prior work~\cite{luo2021synthesizing,song2022rgvisnet,li2025prompt4vis}, we report four widely adopted metrics: \textit{Vis Accuracy}, \textit{Data Accuracy}, \textit{Axis Accuracy}, and \textit{Overall Accuracy}. Since semantically equivalent VQLs may differ syntactically, we additionally report \textit{Execution Accuracy}, which checks whether the predicted VQL yields the same execution result as the target on the underlying database, preventing underestimating of model performance.

Let $N$ be the total number of samples in the test set. Let $N_{Vis}$, $N_{Axis}$, $N_{Data}$, and $N_o$ denote the numbers of predictions that are exactly matched under the corresponding criteria, and let $N_{Exec}$ be the number of predictions that match the execution results.

\begin{itemize}[leftmargin=*, itemsep=2pt, topsep=2pt, parsep=0pt, partopsep=0pt]
    \item \textbf{Vis Accuracy}: chart type match, $\text{Vis Acc.}=\frac{N_{Vis}}{N}$.
    \item \textbf{Axis Accuracy}: axis attribute match, $\text{Axis Acc.}=\frac{N_{Axis}}{N}$.
    \item \textbf{Data Accuracy}: data transformations match, $\text{Data Acc.}=\frac{N_{Data}}{N}$.
    \item \textbf{Overall Accuracy}: exact match on chart type, axes and data transformations, $\text{Acc.}=\frac{N_o}{N}$.
    \item \textbf{Execution Accuracy}: the execution results of the method match the execution results of the target, $\text{Exec Acc.}=\frac{N_{Exec}}{N}$.
\end{itemize}

\subsubsection{Implementation Details}

For the neural network-based baselines that require training, we conduct five-fold cross-validation on PMVisBench. In each fold, we split the data into train/validation/test sets with a 60\%/20\%/20\% ratio. We report metrics on the union of the five test folds to cover the full benchmark.

PMVisAgent requires no task-specific training and is evaluated in a zero-shot setting with different LLMs. Specifically, we use \xwx{\texttt{gpt-4o-mini} (2024-07-18, 16k max output), \texttt{gemini-2.5-flash-lite} (2025-07-22, 16k max output), and \texttt{qwen-plus} (2025-01-25, 8k max output). We set temperature, frequency\_penalty, and presence\_penalty to 0, and set the maximum ReAct steps for the Validation Agent ($m$) to 10.}

\subsection{Performance Comparison}

To comprehensively evaluate the effectiveness of PMVisAgent, we compare it with representative text-to-vis baselines across both single-table (without join operation) and multi-table (with join operation) settings. The main results are summarized in Tab.~\ref{tab:main result}.

Overall, traditional neural network models (Seq2Vis, Transformer, ncNet, RGVisNet, MMCoVisNet) show clear limitations in text-to-vis tasks, especially under execution-based evaluation. Seq2Vis achieves only 0.38\% and 0.96\% execution accuracy in single-table and multi-table settings, respectively. Transformer and ncNet improve to 9.95\%/14.22\% and 13.92\%/16.79\%. RGVisNet performs better with retrieval augmentation (67.51\%/49.63\%), while MMCoVisNet remains limited (44.52\%/31.98\%), despite being designed for multi-round interaction.
Regarding LLM-based models, nvAgent outperforms almost all traditional neural network-based baselines, with execution accuracy reaching 67.13\% and 52.83\% in single-table and multi-table settings, respectively. The strongest LLM-based baseline, Prompt4Vis, achieves a higher 71.03\% execution accuracy on single-table, but drops sharply to 54.65\% in multi-table cases, showing the model struggles to maintain effectiveness when more complex join operations are required.

In contrast, PMVisAgent consistently outperforms all baselines. With qwen-plus, it achieves the best execution accuracy of 88.60\% on single-table (+17.57\% vs. Prompt4Vis) and 77.86\% on multi-table (+23.21\% vs. Prompt4Vis). It may score slightly lower on some clause-level metrics in multi-table cases due to strict exact-match requirements, while its higher execution accuracy indicates stronger semantic correctness and robustness to syntactic variations.

Across LLM backbones, PMVisAgent remains strong, where gpt-4o-mini reaches 83.31\%/65.35\% (single-/multi-table), beating Prompt4Vis by +12.28\%/+10.7\%; gemini improves single-table by +5.36\% and is comparable on multi-table (-1.17\%). Overall, PMVisAgent generalizes well and is largely plug-and-play.

\subsection{Ablation Study}\label{sec:ablation study}

In this subsection, we conduct ablation studies using the same LLM backbones to distinguish the contribution of system design from that of the underlying LLMs. We quantify the impact of progressive multi-turn interaction and validation in PMVisAgent through three variants:
\begin{enumerate}[leftmargin=*]
    \item \textit{Only Single-turn}: the System Agent generates VQLs in a single round, without progressive multi-turn refinement and validation (i.e., the raw LLM output);
    \item \textit{Only Multi-turn}: progressive multi-turn refinement is preserved but Validation agent is disabled;
    \item \textit{Single-turn + Validation}: Validation Agent is preserved but interaction is restricted to the one-shot paradigm.
\end{enumerate}
The full PMVisAgent integrates both progressive multi-turn interaction and validation, serving as the reference configuration.

The results of ablation experiments are shown in Tab.~\ref{tab:ablation}. Removing both multi-turn refinement and validation (\textit{Only Single-turn}) in PMVisAgent causes a substantial drop across metrics. For example, in the single-table setting, execution accuracy decreases by 19.71\% (gpt-4o-mini), 16.38\% (gemini), and 23.61\% (qwen-plus), and similar trends hold for the multi-table setting. Since these variants use the same LLMs as the full PMVisAgent, the gains cannot be attributed merely to stronger model backbones; instead, they reflect the effect of the proposed interaction and validation design. Moreover, disabling validation (\textit{Only Multi-turn}) also results in accuracy degradation since errors accumulate across rounds, reducing single-table execution accuracy by 15.11\%, 16.69\%, and 15.93\% for the three LLMs, respectively. Keeping validation but removing progressive interaction (\textit{Single-turn + Validation}) improves over \textit{Only Single-turn} (e.g., qwen-plus execution accuracy 85.20\% vs.\ 64.99\%), yet falls short of the full PMVisAgent framework (88.60\%), indicating that validation alone cannot compensate for the lack of progressive user refinement, and that both components are complementary.

In summary, combining progressive multi-turn interaction with validation yields the best and most stable performance, supporting PMVis by reflecting real user behavior while ensuring robustness against error accumulation.

\begin{table}[!t]
\centering
\caption{Cost analysis. Comparing PMVisAgent with agent-based baseline nvAgent on the average number of interaction rounds, token usage and latency per round, together with execution accuracy under the same LLM backbone settings.}
\resizebox{\linewidth}{!}{
\begin{tabular}{l|c|c|c|c}
\hline
\multicolumn{1}{c|}{Model} & \# of round & \makecell[c]{Tokens\\/round}                & \makecell[c]{Latency\\(s/round)}              & Exec. ACC.           \\ \hline
nvAgent (gpt-4o-mini)      & 1           & 5349.1                & 6.97                  & 61.83                     \\
PMVisAgent (gpt-4o-mini)   & 3.7         & 1907.6                & 4.56                  & 76.65                \\
nvAgent (gemini)           & 1           & 6733.8                & 8.64                  & 64.61 \\
PMVisAgent (gemini)        & 3.7         & 1983.4                & 6.69                  & 67.90                \\
nvAgent (qwen-plus)        & 1           & 5999.5                & 21.38                 & 70.99 \\
PMVisAgent (qwen-plus)     & 3.7         & 2073.4                & 10.04                  & 84.62                \\ \hline
\end{tabular}}
\label{tab:cost}
\end{table}

\begin{figure*}[ht!]
    \centering
    \includegraphics[width=\textwidth]{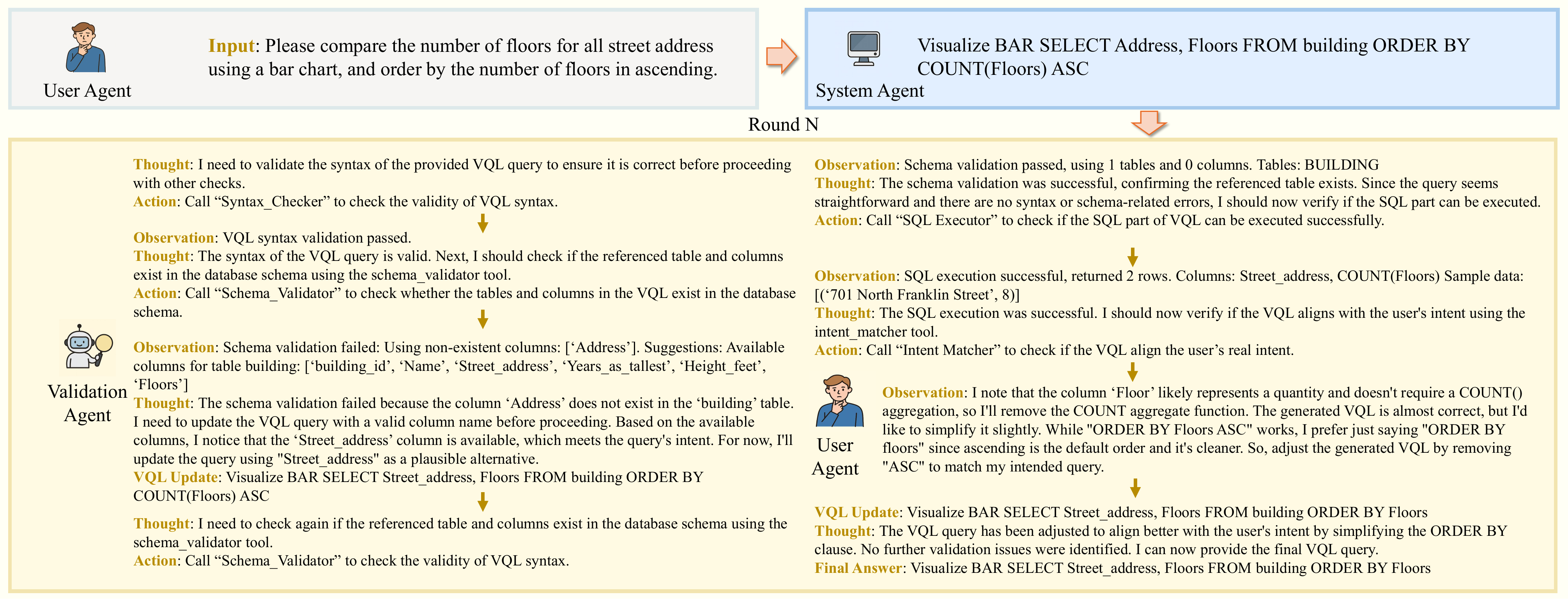}
    \caption{An example of PMVisAgent workflow. This case shows how Validation Agent detects and repairs both schema errors (e.g., replacing a non-existent field) and intent mismatches (e.g., removing unnecessary aggregation and simplifying ordering).}
    \label{fig:case_study}
\end{figure*}

\subsection{Cost and Efficiency Analysis}

\xwx{To explore the runtime and computational cost introduced by PMVisAgent, we compare it against nvAgent, a representative agent-based baseline. We report the average number of rounds, token usage and latency per round, and additionally compare execution accuracy under the same LLM backbone settings, as shown in Table \ref{tab:cost}.}

\xwx{The traditional one-shot paradigm typically supports only a single iteration, while the proposed PMVis paradigm enables multi-round refinement, where each sample requires 3.7 rounds on average in PMVisBench. Notably, although PMVisAgent performs more rounds, its overhead of each round is substantially lower than that of nvAgent. Specifically, PMVisAgent consumes only about 1.9k tokens per round, compared to 5.3k tokens per round for nvAgent, and achieves a lower latency. This indicates that the end-to-end overhead remains within an acceptable range and is of the same order of magnitude as nvAgent in practice. More importantly, at a similar cost, PMVisAgent improves over nvAgent by 3.29\%–14.82\% in execution accuracy. Overall, PMVis delivers significant performance improvements without a noticeable increase in computational overhead, demonstrating a favorable efficiency–effectiveness trade-off.}

\begin{table}[!t]
\centering
\caption{Results of user study. The table shows the performance of PMVisAgent under different configurations.}
\resizebox{\linewidth}{!}{
\begin{tabular}{l|ccccc}
\toprule
\multicolumn{1}{c|}{Model} & Vis ACC.        & Axis ACC.      & Data ACC.      & ACC.           & Exec. ACC.     \\ \midrule
PMVisAgent (qwen-plus)     & 97.50           & \textbf{51.00} & \textbf{57.50} & \textbf{47.50} & \textbf{79.50} \\
- Only Single-turn         & \textbf{100.00} & 27.00          & 38.00          & 21.00          & 62.50          \\
- Only Multi-turn          & \textbf{100.00} & 33.50          & 39.50          & 28.50          & 70.50          \\
- Single-turn + Validation & 99.50           & 48.50          & 55.50          & 43.50          & 75.50          \\ \bottomrule
\end{tabular}}
\label{tab:user study}
\end{table}

\subsection{User Study}

To evaluate PMVis from a user-centered perspective, we conducted a user study with 20 student participants to examine how non-expert users express and refine visualization intents through progressive interaction. We randomly sampled 200 cases from PMVisBench. Each participant completed 10 cases by issuing progressive NLQs to reach their intended visualizations, with iterative refinement encouraged to mimic exploratory analysis. We then executed all 200 query trajectories with PMVisAgent (qwen-plus) under the same settings as the ablation study (Section~\ref{sec:ablation study}).

Results are shown in Tab.~\ref{tab:user study}. PMVisAgent achieves 79.5\% execution accuracy, outperforming the single-turn setting (62.5\%). Enabling only multi-turn refinement or only validation improves accuracy to 70.5\% and 75.5\%, respectively, indicating that progressive refinement and verification both contribute to better outcomes. \xwx{We further analyze whether the refinement trajectories in PMVisBench align with real user patterns. Participants required 3.32 rounds on average to converge to a final intent, close to the 3.7 rounds in PMVisBench. Across early/middle/late phases, visualization requirements are mainly introduced in the early or late stages (about 40\% each), data encoding refinements are more evenly distributed, and grouping/sorting refinements concentrate in the late stage (about 45\%). User refinements also show diverse intent-update combinations rather than a fixed order. These findings motivate our randomized clause-masking design, which avoids a single deterministic path and simulates uncertain progressive refinement, supporting the rationality of PMVisBench in capturing realistic refinement patterns and broad refinement types.}

\subsection{Case Study}

Tab.~\ref{tab:case_study} compares VQLs and visualizations produced by baselines and PMVisAgent for a representative NLQ in PMVisBench. Seq2Vis fails to generate an executable VQL due to join and column-reference errors. ncNet and MMCoVisNet also fail by referring to non-existent schema elements (e.g., an invalid table `type' or column `name'). Transformer generates an executable query (Fig.~\ref{fig:case_transformer}) but selects the wrong tables or columns, yielding incorrect results. RGVisNet retrieves relevant data but misses the required sorting (Fig.~\ref{fig:case_rgvisnet}). Prompt4Vis and nvAgent produce syntactically valid VQLs, yet introduce an extra aggregation on the y-axis, resulting in an incorrect bar chart (Fig.~\ref{fig:case_prompt4vis}). In contrast, PMVisAgent generates the correct VQL, matching the ground truth (Fig.~\ref{fig:case_gt}).

Figure~\ref{fig:case_study} demonstrates an example of the PMVisAgent workflow in round N. In this case, the Validation Agent identifies a schema-grounding error where the column \texttt{Address} does not exist and repairs it to \texttt{Street\_address}. It further detects intent issues, such as an unnecessary \texttt{COUNT(Floors)} aggregation and redundant ordering expressions, and refines the query. By leveraging the PMVis paradigm and validation, the PMVisAgent ensures VQL executability, ultimately converging to the accurate VQL.  This highlights the robustness of PMVisAgent in mitigating error accumulation and faithfully reproducing user-specified visualization requirements.

\section{Conclusion}

In this study, we introduced PMVis, a progressive multi-turn paradigm for text-to-visualization, which addresses the limitations of the traditional one-shot paradigm by enabling users to iteratively refine their analytical intents. To support this paradigm, we constructed PMVisBench, the first benchmark dataset specifically designed to capture the iterative nature of real-world user queries. Furthermore, building on PMVis, we developed PMVisAgent, an agent-based framework that integrates user–system dialogues with a clarification mechanism based on ReAct-style reasoning, effectively mitigating error accumulation across rounds.
Experiments on PMVisBench demonstrated that PMVisAgent significantly outperforms state-of-the-art baselines based on the one-shot paradigm. Ablation studies further validated the contributions of progressive interaction and clarification, while case studies illustrated how PMVisAgent generates accurate visualizations.

Limitations and Future Work.
While PMVisAgent demonstrates strong performance, its current design relies on multi-agent collaboration and iterative reasoning, which inevitably brings higher computational cost and lower efficiency compared to methods based on the one-shot paradigm. This may hinder real-time deployment in interactive analytics. In future work, we plan to explore lightweight agent coordination strategies and adaptive reasoning mechanisms to improve inference efficiency without sacrificing performance.

\bibliographystyle{IEEEtran}
\bibliography{ref}

\end{document}